\documentclass[12pt]{article}
\usepackage[utf8]{inputenc}
\usepackage{euscript}
\usepackage{a4wide}
\usepackage[T2A]{fontenc}
\usepackage{amsfonts}
\usepackage{amscd}
\usepackage[matrix,arrow,curve]{xy}
\usepackage{amssymb, amsthm}
\usepackage{amsmath}
\usepackage{pdfpages}
\usepackage{graphicx}
\usepackage{ dsfont }
\usepackage{geometry}
\usepackage[unicode, pdftex]{hyperref}
\usepackage{tikz}
\usetikzlibrary{graphs}
\usetikzlibrary{calc}

\usepackage{comment}

\theoremstyle{plain}
\newtheorem{thm}{Theorem} 
\newtheorem{lm}{Lemma}
\newtheorem{st}{Proposition}
\newtheorem*{ex}{Example}

\theoremstyle{definition}
\newtheorem{remark}{Remark}
\newtheorem{defn}{Definition}
\newtheorem*{n}{Remark}

\newtheorem{proposition}{Proposition}

\theoremstyle{remark}

\usepackage{graphicx}
\graphicspath {{./images/}}
\DeclareGraphicsExtensions{.pdf}

\title{WELLDOC property for words generated by morphisms}

\author{Svetlana Puzynina, Vladimir Schavelev\\
Saint Petersburg State University, Russia\\
s.puzynina@gmail.com, vovashavelev11@mail.ru}

\date{}

\begin{document}

\sloppy

\maketitle

\begin{abstract} In this paper, we study an abelian-type property of infinite words called \emph{well distributed occurrences}, or WELLDOC for short. An infinite word $w$ on a $d$-ary alphabet has the WELLDOC property if, for each factor $u$ of $w$, positive integer $m$, and vector 
$v\in \mathbb{N}^d$, there is an occurrence of $u$ such that the Parikh vector of the prefix of $w$ preceding such occurrence is congruent to $v$ modulo $m$. The Parikh vector of a finite word $v$ on an alphabet 
 has its $i$-th component equal to the number of occurrences of the $i$-th letter in $v$. We provide a criterion of the WELLDOC property for words generated by morphisms.
\end{abstract}


\section{Introduction}

In this paper, we study the well distributed occurrences,
or WELLDOC for short, of infinite words generated by morphisms. This is an abelian type property of infinite words regarding the regularity of distribution of factors in a word. An infinite word $u$ over an alphabet $A$ satisfies the WELLDOC property if for each integer $m$ and each
factor $w$ of $u$, the set of Parikh vectors modulo $m$ of prefixes of $u$ preceding the occurrences of $w$ coincides with $\mathbb{Z}_m$ (see Definition \ref{defWD}). The WELLDOC property was originally introduced in the context of pseudorandom number generators \cite{Apng}. 

A {\it linear congruential generator} is a sequence $Z_{n+1} = aZ_n + c \mod m$, for which $a, c, m \in \mathbb{N}$. Linear congruential generators have a defect called lattice structure: if we consider the set of all $n$ consecutive numbers from the generator as a subset of  $\mathbb{Z}^n$, they are covered with a family of equidistant hyperplanes which do not cover the whole plane. Guimond et al. \cite{GuPaPa1} proved that when two linear congruential generators are combined
using infinite words coding certain classes of quasicrystals or, equivalently,
of cut-and-project sets, the resulting sequence is aperiodic and has no lattice
structure. More generally, given an infinite word $w=w_0w_1\cdots$ over an alphabet $\Sigma$ and linear congruential generators $Z^{(i)}$, $i=1,\dots, |\Sigma|$, one can combine them as follows. Consider a function $f$ defined by $f(i) = |\{j < i| w_j = w_i\}|$; i.e., the function $f$ counts the number of letters equal to $w_i$ in the prefix of length $i$ of $w$. Then the generator built from $Z^{(i)}$ according to the word $w$ is defined by  $Z(w)_{n} = Z^{(w_n)}_{f(n)}$.

Choosing the word $w$, one can get a sequence which does not have the lattice structure defect. In \cite{Apng} the authors have found a combinatorial condition of well distributed occurrences
which guarantees absence of the lattice structure in related pseudorandom generators. Besides that, they proved that Sturmian and episturmian words satisfy the WELLDOC property, and proposed an open question:  characterize morphisms generating words satisfying the WELLDOC property. In this paper, we answer the question. The advantage of using words generated by morphisms for building pseudorandom generators is that such words can be generated very fast, so that they do not slow down the generation process.




The main result of the paper is a criterion of the WELLDOC property for words generated by morphisms. For binary alphabets, a recurrent word generated by a morphism has the WELLDOC property if and only if the determinant of the matrix of the morphism equals $\pm 1$ (see Theorem \ref{thdet1bin}). For non-binary words generated by morphisms an additional condition is needed: the WELLDOC property holds if and only if the determinant of the matrix of the morphism is equal to $\pm 1$ and Parikh vectors of returns to the first letter of this word generate the space $\mathbb{Z}^{|\Sigma|}$ (see Theorem \ref{thdet1all}). The latter condition is shown to be decidable. 

The paper is organized as follows. In the next section we provide necessary combinatorial definitions and notation, as well as the statements of our main results. In   
 \hyperlink{alg}{Section \ref{alg}}, we provide necessary backgrounds on algebra. In \hyperlink{secdet1}{Section \ref{secdet1}} we prove that the property from Theorems \ref{thdet1bin} and \ref{thdet1all} is sufficient for the WELLDOC property, and in \hyperlink{secdet0}{Section \ref{secdet0}} we prove that it is necessary.

\section{Definitions and main results}

An \emph{alphabet} $\Sigma$ is a finite set, and its elements are called  \emph{letters}. A  \emph{finite }(resp.,  \emph{infinite})  \emph{word} is a finite (resp.,  infinite) sequence $w = w_0w_1\cdots$ of elements from $\Sigma$. We let $\Sigma^*$ denote the set of finite words. The empty word is denoted by $\varepsilon$. A \emph{factor} of $w$ is a finite subsequence of its consecutive letters. A \emph{prefix} is a particular case of a factor starting from position $0$. For a finite word $u=u_0\cdots u_{n-1}$, its \emph{suffix} is a factor ending at position $n-1$, and its \emph{proper factor} is a factor that is neither its prefix, nor its suffix, nor the empty word. For a finite word $u=u_0\cdots u_{n-1}$, its length is denoted by $|u|=n$, and the number of occurrences of a letter $a$ in it is denoted by $|u|_a$. For an infinite word $w = w_0w_1\cdots$, we let $w[i, i+k)$ denote its factor of length $k$ starting from the position $i$:  $w[i, i+k)=w_i\cdots w_{i+k-1}$. 

A \emph{morphism} $\phi$ is a map from $\Sigma^*$ to $\Sigma^*$ such that $\phi(uv) = \phi(u)\phi(v)$ for all finite words $u,v \in \Sigma^*$. All morphisms considered in this paper are \emph{nonerasing}, i.e., the image of any non-empty word is non-empty. A morphism $\phi$ is prolongable on a letter $a$ if $\phi(a) = as$ for some nonempty finite word $s$. For each letter $a$ on which $\phi$ is prolongable, the morphism generates an infinite word beginning with $a$: $w = \lim_{n \to \infty} \phi^n(a),$ where the limit is considered in the prefix sense. Such words are also called \emph{fixed points} of the morphism $\phi$, since $w = \phi(w)$.  They are also referred to as (purely) morphic words and D0L-words. Note that we have the following equality: $w=as\phi(s)\phi^2(s)\cdots\phi^{(n)}(s)\cdots$. 
     A morphism $\phi$ is called \emph{primitive} if for each letter $x$ and some $k$ the word $\phi^k(x)$ contains each letter of the alphabet, and non-primitive otherwise. 

In the further text, we take the alphabet to be the initial segment of integers: $\Sigma = \{0, 1, \ldots, \sigma-1\}$ for some integer $\sigma$; so, $\sigma =|\Sigma|$. Without loss of generality we will consider only fixed points of morphisms beginning with $0$, and so we suppose that the morphisms we consider are prolongable on $0$. 

\begin{defn}
    For a finite word $u$, its {\it Parikh vector} $V_u$ is the vector $V_u=(|u|_0, |u|_1, \ldots, |u|_{\sigma-1})$.
\end{defn}

In the further text, we will write all vectors horizontally.
 
\begin{defn}
     Let $w$ be an infinite word, $u$ its factor and   $a_0, a_1, \ldots$ the set of all positions of occurrences of $u$, i.e. $u=w[a_i, a_i+|u|)$. Words of the form $w[0,a_i)$ are then called 
 {\it prefixes preceding $u$}. We let $X_u$ denote the set of Parikh vectors of all prefixes preceding $u$ in $w$, i.e., $X_u=\{ V_{w[0,a_i)} | i\in \mathbb{N}\}$. We let $X_{u, m}$ denote the set of vectors over $\mathbb{Z}/m\mathbb{Z}$  from $X_u$ modulo $m$.  
    
\end{defn}

In \cite{Apng} the authors found the following condition called WELLDOC for a word $w$ which guarantees absence of the lattice structure defect for the sequence  $Z(w)_n$:

\hypertarget{defWD}{}
\begin{defn} \label{defWD}
An infinite word $w$ satisfies the {\it WELLDOC property} if for each integer $m$ and each factor $u$ of $w$ we have $X_{u,m} = (\mathbb{Z}/m\mathbb{Z})^\sigma$, i.e., for each vector  $v=(v_0, \ldots, v_{\sigma-1})$ there exists a prefix $p$ preceding $u$ such that $|p|_i \equiv_m v_i$ for each $i$, or, equivalently, $V_p \equiv_m v$. 
\end{defn}

As the main result of this paper, we find a criterion of the WELLDOC property for words generated by morphisms. To formulate this criterion, we need the following definition: 

\begin{defn}

    A \emph{matrix} of a morphism $\phi$ is the following matrix of order $\sigma=|\Sigma|$: 
    
    \begin{equation*}
        A_\phi = 
        \begin{pmatrix}
            |\phi(0)|_0 & |\phi(1)|_0 & \ldots & |\phi(\sigma-1)|_0 \\
            |\phi(0)|_1 & |\phi(1)|_1 & \ldots & |\phi(\sigma-1)|_1 \\
            \vdots & \vdots & \ddots & \vdots \\
            |\phi(0)|_{\sigma-1} & |\phi(1)|_{\sigma-1} & \ldots & |\phi(\sigma-1)|_{\sigma-1} \\
        \end{pmatrix}
    \end{equation*}

    
    
\end{defn}

\hypertarget{thdet1bin}{}
\begin{thm} \label{thdet1bin} 
    Let $w$ be an infinite recurrent binary word generated by a morphism $\phi$. Then $w$ satisfies  the WELLDOC property if and only if $\det A_\phi = \pm 1$.
\end{thm}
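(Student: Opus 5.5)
The plan is to treat the two directions separately. Throughout I write $X_{u,m}$ for the reductions modulo $m$ of the Parikh vectors of prefixes preceding $u$, as in Definition~\ref{defWD}. I use repeatedly that $w=\phi^n(w)$ for every $n$. Hence if $w[0,i)$ is followed by the letter $a$, then $\phi^n(w[0,i))$ is a prefix of $w$ followed by $\phi^n(a)$, and
$$V_{\phi^n(w[0,i))}=A_\phi^n V_{w[0,i)} .$$

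\textbf{Sufficiency ($\det A_\phi=\pm1$).} Fix $m$ and a factor $u$.
\begin{enumerate}
\item By recurrence and since $w$ starts with $0$, $u$ occurs inside some prefix $\phi^n(0)$, say at offset $c$. Every occurrence of $0$ at a prefix with Parikh vector $x$ then yields an occurrence of $\phi^n(0)$, hence of $u$, at a prefix with Parikh vector $A^n_\phi x + V_{\phi^n(0)[0,c)}$. So
$$X_{u,m}\supseteq A_\phi^n X_{0,m}+c' \quad\text{for a fixed } c'.$$
\item Since $\det A_\phi=\pm1$, $A_\phi$ is invertible modulo every $m$. It therefore suffices to show $X_{0,m}=(\mathbb{Z}/m\mathbb{Z})^2$.
\item The set $S=X_{0,m}$ is closed under adding Parikh vectors of return words to $0$, taken modulo $m$ (recurrence lets every such return be continued). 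It is also closed under the affine map $x\mapsto A_\phi x$, since $\phi(0)$ begins with $0$. Because $0\in S$ and $(\mathbb{Z}/m\mathbb{Z})^2$ is finite, $S$ contains the subgroup $L_m$ generated by the return vectors. So it is enough to show that the lattice $L\subseteq\mathbb{Z}^2$ generated by the return vectors is all of $\mathbb{Z}^2$.
\item In the binary case the returns to $0$ are the words $01^k$, with vectors $(1,k)$. Thus $L=\{(a,b): b\equiv k_0a \pmod g\}$, where $g$ is the gcd of differences of the exponents $k$ that occur. Here $L$ is $A_\phi$-invariant, since the image of a return to $0$ is a concatenation of returns.
\end{enumerate}

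Step 4 is the main obstacle: I must show $g=1$. The binary structure must rule out $g>1$, which is exactly where the non-binary case needs the extra hypothesis. My plan is to use $\det A_\phi=\pm1$ together with the fact that $L$ contains both $V_{\phi^n(0)}$ and $V_{\phi^n(01^{k})}$ for all $n$. The $A_\phi$-invariance of $L$ and the condition $A_\phi L=L$ should force the second coordinates of the letter images, read modulo $g$, to be compatible only if $g=1$. Concretely, I will write $A_\phi$ modulo $g$ in the basis $(1,k_0),(0,g)$ and derive a contradiction with unimodularity. If this fails, I would fall back on working directly with $X_{0,m}$ rather than with $L$.

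\textbf{Necessity.} Suppose $\det A_\phi\ne\pm1$ and pick a prime $p$ with $p\mid\det A_\phi$ (any $p$ works if the determinant is $0$). Then $H=A^n_\phi(\mathbb{Z}/p\mathbb{Z})^2$ is a proper subgroup for every $n\ge1$.
\begin{enumerate}
\item \emph{Periodic case.} If $w$ is eventually periodic, hence purely periodic by recurrence, the occurrences of a suitable long factor lie in a single residue class of positions modulo the period. This confines $X_{u,p}$ to a coset of a cyclic subgroup, so WELLDOC fails.
\item \emph{Aperiodic case.} I will invoke recognizability (Mossé's theorem, which needs primitivity; for the recurrent non-primitive binary case I will first check that recurrence essentially forces primitivity on $\{0,1\}$). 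It gives a length $\ell$ such that every occurrence of a sufficiently long factor $u$ has a unique cutting into $\phi^n$-blocks, with the cut falling at a fixed offset inside $u$.
\item Consequently every prefix preceding $u$ has the form $\phi^n(q)s$ with $s$ a fixed word. Hence
$$X_{u,p}\subseteq H+V_s ,$$
a proper coset, and WELLDOC fails.
\end{enumerate}
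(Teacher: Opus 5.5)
\textbf{Sufficiency.} Step 4 is the whole binary-specific content, and it is not proved. The mechanism you propose there cannot work. $A_\phi$-invariance of $L$ together with $\det A_\phi=\pm1$ gives no contradiction, because unimodular matrices can preserve sublattices of index $g>1$: for example, $\left(\begin{smallmatrix}1&2\\0&1\end{smallmatrix}\right)$ maps $\{(a,b): b\equiv a \pmod 2\}$ onto itself. In the basis $(1,k_0),(0,g)$ you simply get another unimodular integer matrix.

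What you need is the stronger fact that both columns $V_{\phi(0)},V_{\phi(1)}$ lie in $L+p\mathbb{Z}^2$ for a prime $p\mid g$, which forces $p\mid\det A_\phi$. This comes from the combinatorics of block boundaries, not from invariance. Let $x$ be the common number of $1$'s in returns modulo $p$, and write $\phi(0)=0\cdots 0\,1^{y}$ and $\phi(1)=1^{z_1}0\cdots 0\,1^{z_2}$. The factors $01,10,11$ make $\phi(0)\phi(1)$, $\phi(1)\phi(0)$, $\phi(1)\phi(1)$ factors, so $y+z_1\equiv z_2\equiv z_1+z_2\equiv x \pmod p$. Hence $z_1\equiv 0$ and $y\equiv x$. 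It follows that $V_{\phi(0)}\equiv l(1,x)$ and $V_{\phi(1)}\equiv m(1,x)$, so $p\mid\det A_\phi$.

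The degenerate cases need separate treatment:
\begin{itemize}
\item If $11$ is not a factor, then either $0$ and $01$ are both returns, which already generate $\mathbb{Z}^2$, or $w=(01)^\infty$, whose morphisms are checked directly to have $|\det A_\phi|\neq 1$.
\item The case $\phi(1)\in 1^*$ is excluded by $\det A_\phi=\pm1$ together with recurrence.
\end{itemize}

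Your Step 3 has a separate problem. The closure claim is true here, but not for the reason you give; recurrence alone does not suffice. For the recurrent word $(01011)^\infty$ we have $(0,0)\in X_{0,3}$ and $(1,2)$ is a return vector, yet $(1,2)\notin X_{0,3}$. The morphism is what makes closure work. If $s0$ and $t0$ are prefixes, then $\phi^N(s)t0$ is a prefix for large $N$. Choosing $N$ with $A_\phi^N\equiv I \pmod m$ shows that $X_{0,m}$ is closed under addition, hence a subgroup (Lemma~\ref{lmsubgr}). Return vectors are then differences of elements of $X_{0,m}$, so they lie in it.

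\textbf{Necessity.} Your claim that recurrence essentially forces primitivity in the binary case is false. Take $\phi(0)=010$, $\phi(1)=11$. Its fixed point is binary, recurrent (since $\phi^{n+1}(0)=\phi^n(0)\phi^n(1)\phi^n(0)$), and aperiodic (it has unbounded runs of $1$'s), and $\det A_\phi=4$. Yet $\phi^k(1)=1^{2^k}$ never contains $0$, so $\phi$ is not primitive. Mossé's theorem therefore does not give you cutting factors in every case you must cover.

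The paper closes this gap differently. It uses the recognizability theorem for arbitrary morphisms at aperiodic points of $X_\phi$ (Theorem~\ref{thallrec}, from Berth\'e et al.). It transfers this to Moss\'e-type recognizability (Theorem~\ref{threcMos}) for a biinfinite extension of $w$ with the same language (Lemma~\ref{lmcontoleft}). This yields cutting factors for every recurrent aperiodic morphic word (Theorem~\ref{thcutexists}).

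The rest of your necessity argument matches the paper: the periodic case, and the coset argument once a cutting factor exists. One small correction: with the cut at offset $k$ inside $u$, a preceding prefix $r$ satisfies $r\,u[0,k)=\phi(q)$, so $V_r=A_\phi V_q-V_{u[0,k)}$. Thus $X_{u,p}$ lies in the coset $H-V_{u[0,k)}$, not in a coset of the form $\phi^n(q)s$.
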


\begin{remark}
    In this paper we do not consider the words $0^\infty$ and $1^\infty$ to be binary words. Clearly, for these words the WELLDOC property holds, although they are defined by the morphism $\phi(0) = 00, \phi(1) = 11$, while its matrix has determinant equal to $4$. 
\end{remark}

\begin{remark}   Notice that only recurrent words can satisfy the WELLDOC property by definition. It is easy to check if a purely morphic word is recurrent. Indeed, consider increasing powers of the morphism until the set of letters in the image of each letter stays the same (we need to consider exponents at most $|\Sigma|^2$). If $0$ is in the image of some letter $a$ of this power and $a$ is in the image of $0$, then the fixed point of this morphism beginning with 0 is recurrent, otherwise it is not.  So, we state all our results for recurrent words. 
\end{remark}

\hyperlink{thdet1bin}{Theorem \ref{thdet1bin}} will be proved in the following two sections: the sufficiency of the condition for the determinant in \hyperlink{thdet1>wbin}{Theorem \ref{thdet1>wbin}}, and the necessity in Theorem \ref{th:necessary}. 

Before proving  Theorem \ref{thdet1bin}, we provide an example of its application to Sturmian words. More precisely, we will reprove in part a result from  \cite{Apng}.

\begin{defn}
    An infinite word $w$ is called {\emph{Sturmian}} if for each $n \in \mathbb{N}$ the word $w$ contains exactly  $n+1$ factors of length $n$.
\end{defn}

\begin{defn}
    A morphism $\phi$ is called \emph{Sturmian} if  $\phi(w)$ is a Sturmian word for each Sturmian word $w$. 
\end{defn}

Some (but not all) Sturmian words are generated by morphisms, and it is well known that these morphisms are Sturmian morphisms  (see, e.g., \cite[Chapter 2]{W}). 

In \cite{Apng}, the authors prove that Sturmian words satisfy the WELLDOC property; a particular case of this result for Sturmian words generated by morphisms is a corollary from  \hyperlink{thdet1bin}{Theorem \ref{thdet1bin}}: 

\begin{proposition}
    Sturmian words generated by morphisms satisfy the WELLDOC property.
\end{proposition}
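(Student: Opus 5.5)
The proof will be a direct application of Theorem \ref{thdet1bin}. Let $w$ be a Sturmian word generated by a morphism $\phi$, so $w=\phi(w)$. As recalled above, $\phi$ is then a Sturmian morphism. Two things must be checked: that $w$ satisfies the hypotheses of Theorem \ref{thdet1bin}, and that $\det A_\phi=\pm 1$.

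For the hypotheses, note that a Sturmian word is binary: it has exactly two factors of length $1$. It is also not $0^\infty$ or $1^\infty$, since those have only one factor of length $1$. Recurrence is classical: a Sturmian word is uniformly recurrent. If I want to avoid citing this, I can argue directly. A non-recurrent factor would occur only finitely often, so some suffix of $w$ would have strictly fewer factors of each large length $n$. But every suffix of an aperiodic word has complexity at least $n+1$, since complexity at most $n$ forces eventual periodicity. This contradicts the fact that $w$ has exactly $n+1$ factors of length $n$. I expect simply to cite the standard fact, e.g. from \cite[Chapter 2]{W}.

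For the determinant, I would use the standard characterization of Sturmian morphisms from \cite[Chapter 2]{W}. The monoid of Sturmian morphisms is generated by the exchange $E\colon 0\mapsto 1,\ 1\mapsto 0$ and the morphisms $\varphi\colon 0\mapsto 01,\ 1\mapsto 0$ and $\tilde\varphi\colon 0\mapsto 10,\ 1\mapsto 0$. Since $A_{\psi\circ\chi}=A_\psi A_\chi$, the map $\psi\mapsto\det A_\psi$ is multiplicative, so it suffices to check the generators. We have
\[
A_E=\begin{pmatrix}0&1\\1&0\end{pmatrix},\qquad A_\varphi=A_{\tilde\varphi}=\begin{pmatrix}1&1\\1&0\end{pmatrix},
\]
both of determinant $-1$. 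Hence $\det A_\phi=\pm1$, and Theorem \ref{thdet1bin} gives the WELLDOC property.

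The only real obstacle is the use of the structure theorem for Sturmian morphisms, i.e.\ that every Sturmian morphism is a composition of $E,\varphi,\tilde\varphi$. This is a known result (Mignosi--Séébold) and I would cite it rather than reprove it. An alternative route would be the fact that Sturmian morphisms are exactly the morphisms whose matrix lies in $GL_2(\mathbb{Z})$ among local automorphisms, but the generator argument is cleanest.
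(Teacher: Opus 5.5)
Your proof is correct and is the paper's argument: write the Sturmian morphism as a composition of elementary Sturmian morphisms whose matrices have determinant $\pm1$, use multiplicativity of $\det A_\phi$ under composition, and apply Theorem~\ref{thdet1bin}. The paper uses the generators $E,\psi_0,\widetilde{\psi}_0$ rather than your $E,\varphi,\tilde\varphi$; since $\varphi=E\circ\widetilde{\psi}_0$ and $\tilde\varphi=E\circ\psi_0$, the difference is cosmetic, and your explicit check that $w$ is binary and recurrent spells out hypotheses the paper leaves implicit.
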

\begin{proof}
    We will make use of the following fact (see, e.g., \cite[Chapter 2.3]{W}): each Sturmian morphism is a composition of several morphisms of the following form: 
    \begin{equation*}
        E: \begin{matrix}
        0 \to 1 \\ 1 \to 0
        \end{matrix} \qquad
        \psi_0: \begin{matrix}
        0 \to 01 \\ 1 \to 1
        \end{matrix} \qquad
        \widetilde{\psi}_0: \begin{matrix}
        0 \to 10 \\ 1 \to 1
        \end{matrix}
    \end{equation*}
Note that matrices of these morphisms have determinant $\pm 1$, so the matrix of any their composition also has determinant $\pm 1$. Thus, we have the WELLDOC property by  \hyperlink{thdet1bin}{Theorem \ref{thdet1bin}}.
\end{proof}

Theorem \ref{thdet1bin} can be generalized to nonbinary alphabets with an additional condition:

\begin{thm} \label{thdet1all}
    Let $w$ be an infinite word generated by a morphism $\phi$. Then $w$ satisfies the WELLDOC property if and only if $\det A_\phi = \pm 1$ and Parikh vectors of all returns to the first letter of $w$ generate $\mathbb{Z}^{|\Sigma|}$ as an additive group.
\end{thm}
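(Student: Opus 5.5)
We treat the two directions separately, writing $L$ for the subgroup of $\mathbb{Z}^\sigma$ generated by the Parikh vectors of returns to $0$, that is, of the words $w[a_i,a_{i+1})$ where $a_0<a_1<\cdots$ are the positions of $0$ in $w$. The word $w$ is implicitly recurrent here, since otherwise WELLDOC fails trivially.

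\textbf{Necessity.} We start from the observation that $X_0 = \{V_{w[0,a_i)}\}$ is a set of partial sums of return vectors, and $a_0=0$, so $X_0\subseteq L$. If $L\neq\mathbb{Z}^\sigma$, then either $L$ has rank $<\sigma$, or $\mathbb{Z}^\sigma/L$ is a nontrivial finite group. In the rank-deficient case we pick a nonzero primitive integer functional $f$ vanishing on $L$, and in the finite-index case we pick a prime $p$ dividing the index. In both cases there is some $m$ such that the image of $L$ in $(\mathbb{Z}/m\mathbb{Z})^\sigma$ is a proper subgroup. Then $X_{0,m}$ is not everything, so WELLDOC fails for the factor $u=0$.

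If instead $\det A_\phi\neq\pm1$, we choose a prime $p\mid\det A_\phi$, so $A_\phi$ is singular mod $p$. Occurrences of a long factor $u=\phi^k(0)$ are essentially forced to come from images of occurrences of $0$ under $\phi^k$: we would invoke a recognizability or synchronization argument, or, more cheaply, use occurrences of $\phi^k(v)$ for a suitable $v$ whose occurrences in $w$ are desubstitutable. In that case the preceding prefix has the form $\phi^k(q)$, and its Parikh vector $A_\phi^k V_q$ lies in the image of $A_\phi^k$ mod $p$, which is a proper subspace. This argument is delicate in general: recognizability may fail for non-primitive or periodic fixed points. For the hard case we would first try to prove that every Parikh vector of a prefix preceding a sufficiently long factor lies in $A_\phi^k\mathbb{Z}^\sigma+F$ for a finite set $F$ of short correction vectors, and then take $k$ large enough that the image mod $p^{k'}$ is still too small. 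This is the step we expect to be the main obstacle. Presumably it is what the paper's Theorem \ref{th:necessary} handles, and we would reuse that argument.

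\textbf{Sufficiency.} Fix $m$ and a factor $u$. By recurrence, $u$ occurs inside $\phi^k(0)$ for some $k$, at an offset $t$, and since $0$ occurs at positions $a_i$, the word $\phi^k(0)$ occurs at position $|\phi^k(w[0,a_i))|$. Hence
\[
X_u \supseteq \{A_\phi^k x + c : x\in X_0\},
\]
where $c$ is the Parikh vector of the length-$t$ prefix of $\phi^k(0)$. Since $\det A_\phi=\pm1$, the matrix $A_\phi^k$ is invertible mod $m$, so it suffices to show $X_{0,m}=(\mathbb{Z}/m\mathbb{Z})^\sigma$.

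It remains to pass from ``returns generate $\mathbb{Z}^\sigma$'' to ``partial sums of returns hit every residue mod $m$''. Here we use that $w$ is a fixed point. If $0$ occurs at position $a$ with prefix vector $x$, then applying $\phi^{j}$ shows that $0$ also occurs with prefix vector $A^{j}_\phi x$. Moreover, $A_\phi$ mod $m$ has finite order $N$, so the set of prefix vectors mod $m$ preceding $\phi^{N}(z)$ for a return word $z$ can be iterated. Concretely, if $y,\,y+r\in X_{0,m}$ correspond to a prefix $p$ and a prefix $pz$ with $z$ a return word, then images under $\phi^{N\ell}$ show that $X_{0,m}$ is closed under adding $r$ to any element lying in a suitable large block. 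We would formalize this by showing that $X_{0,m}$, intersected with prefixes inside $\phi^{N\ell}(0)$, is invariant under translation by every return vector (every return $z$ occurs in some $\phi^{N\ell}(0)$, and $\phi^{N\ell}(0)$ occurs again after a prefix whose vector is $\equiv 0$, by recurrence together with a pigeonhole argument on residues). Hence $X_{0,m}$ contains the subgroup generated by the return vectors mod $m$, which is everything.

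In the binary case, Theorem \ref{thdet1bin}, we additionally need that $\det=\pm1$ forces the returns to generate $\mathbb{Z}^2$. The plan for this is that the return words $0$ and $01^j$ (or their images) give vectors $(1,0)$ and $(1,j)$, and we combine this with the unimodularity.
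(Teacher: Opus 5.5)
\textbf{Gap: necessity of $\det A_\phi=\pm1$.} You do not prove this direction; you defer to Theorem~\ref{th:necessary}, which is exactly the part of the statement that has to be established. Your fallback does not close without a further idea. That fallback bounds prefix vectors by $A_\phi^k\mathbb{Z}^\sigma+F$ for a finite correction set $F$ and then counts. Without synchronization, the only correction set you get for free is the set $F_k$ of Parikh vectors of proper prefixes of the blocks $\phi^k(a)$. Its size grows like $\lambda^k$, where $\lambda$ is the growth rate of $\phi$. Meanwhile $[\mathbb{Z}^\sigma:A_\phi^k\mathbb{Z}^\sigma]=|\det A_\phi|^k$ can grow more slowly: $\left(\begin{smallmatrix}3&1\\1&1\end{smallmatrix}\right)$ has determinant $2$ but $\lambda=2+\sqrt2$. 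So counting cannot rule out that these translates cover everything. (Counting does suffice when $\det A_\phi=0$: then $A_\phi\mathbb{Z}^\sigma$ modulo a prime $p$ has at most $p^{\sigma-1}$ elements, and $|F_1|<p$ translates of it cannot cover $(\mathbb{Z}/p\mathbb{Z})^\sigma$. The real difficulty is $|\det A_\phi|\ge 2$.)

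The missing idea is a \emph{cutting factor}: a factor $u$ with a fixed offset $k$ such that every occurrence of $u$ at position $i$ has a cutting point at $i+k$. Then every prefix preceding $u$ has Parikh vector $A_\phi V_s-V_{u[0,k)}$, so it lies in a \emph{single} coset of $A_\phi\mathbb{Z}^\sigma$. That coset misses residues modulo a prime dividing $\det A_\phi$ (Theorem~\ref{thinWD}). The paper obtains cutting factors (Theorem~\ref{thcutexists}) from recognizability for aperiodic points, valid for \emph{every} morphism (Theorem~\ref{thallrec}), converted to Moss\'e's form (Theorem~\ref{threcMos}). Before that it disposes of two cases directly. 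The non-recurrent case is immediate because $X_u$ is finite. In the periodic case $w=v^\infty$ with $v$ primitive, the prefixes preceding $v$ are the powers of $v$, so $X_{v,m}$ is cyclic. So your worry about non-primitive morphisms is misplaced, and periodic words need no recognizability at all.

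\textbf{The remaining parts are sound and follow the paper.} The necessity of the return condition via $X_0\subseteq L$ is a valid direct substitute for Proposition~\ref{algn=allp}. The reduction from an arbitrary factor $u$ to the factor $0$ via occurrences of $\phi^k(0)$ is Lemma~\ref{lmonly0} done in one step. Your argument for $X_{0,m}\supseteq L \bmod m$ is murky as written (``translation invariance inside $\phi^{N\ell}(0)$'' plus a pigeonhole). The clean form is Lemma~\ref{lmsubgr}. If $p0$ and $q0$ are prefixes of $w$ with $q0$ a prefix of $\phi^k(0)$, then $\phi^{Nk}(p)q0$ is a prefix with Parikh vector $\equiv V_p+V_q\pmod m$. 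Hence $X_{0,m}$ is a nonempty additively closed subset of a finite group, so it is a subgroup. It contains each return vector as the difference $V_{w[0,a_{i+1})}-V_{w[0,a_i)}$ of two elements of $X_0$. With this in place you do not even need the reduction to prime moduli.
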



We prove \hyperlink{thdet1bin}{Theorem \ref{thdet1all}} in the following two sections: the sufficiency of the condition on the determinant in  
\hyperlink{thdet1}{Theorem \ref{thdet1}}, and the necessity in Theorem \ref{th:necessary}.

\hypertarget{alg}{}

\section{Algebraic statements} \label{alg}

Here we gather all algebraic facts used in this paper. All statements, probably with the exception of the last one, are well-known and can be found in textbooks on algebra, e.g.,  \cite{AM, Alin,  A}. We provide references or proofs for all the statements for the sake of completeness.
\hypertarget{algdet1}{}
\begin{proposition} \label{algdet1} 
    A matrix $A$ over a field is invertible if and only if $\det A \ne 0$. A matrix $A$ over $\mathbb{Z}$ is invertible if and only if $\det A = \pm 1$. This in turn is equivalent to the fact that the matrix $A$ with entries modulo $m$ is invertible over $\mathbb{Z}/m\mathbb{Z}$ for each $m$.
\end{proposition}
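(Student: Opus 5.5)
The plan is to prove the three equivalences using the adjugate identity $A\cdot\operatorname{adj}(A)=\operatorname{adj}(A)\cdot A=\det(A)\,I$, which holds over any commutative ring $R$. We also use that the determinant is multiplicative, and that the reduction map $\mathbb{Z}\to\mathbb{Z}/m\mathbb{Z}$ is a ring homomorphism, so it commutes with taking determinants. The general statement to establish first is: a square matrix $A$ over a commutative ring $R$ is invertible if and only if $\det A$ is a unit in $R$. All three claims then follow by specializing $R$.

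For the general statement there are two directions. If $\det A$ is a unit, then $(\det A)^{-1}\operatorname{adj}(A)$ is a two-sided inverse of $A$, by the adjugate identity. Conversely, suppose $AB=I$. Then $\det A\cdot\det B=\det I=1$, so $\det A$ is a unit. Taking $R$ a field gives the first claim, since the units of a field are exactly its nonzero elements. Taking $R=\mathbb{Z}$ gives the second claim, since the units of $\mathbb{Z}$ are exactly $\pm1$.

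For the third equivalence, let $\bar A$ denote $A$ with entries reduced modulo $m$. Since reduction is a ring homomorphism, $\det\bar A=\overline{\det A}$.
\begin{itemize}
\item If $\det A=\pm1$, then $\overline{\det A}=\pm1$ is a unit in $\mathbb{Z}/m\mathbb{Z}$ for every $m$, so $\bar A$ is invertible for every $m$.
\item Conversely, suppose $\bar A$ is invertible modulo every $m$, so $\det A$ is a unit modulo every $m$. If $|\det A|\neq1$, choose $m$ as follows. If $\det A=0$, take $m=2$, where $0$ is not a unit. Otherwise take $m=|\det A|\ge2$, where $\det A\equiv 0$ is not a unit. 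In either case we get a contradiction, so $\det A=\pm1$.
\end{itemize}

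The only step needing care is the multiplicativity of the determinant and the adjugate identity over a general commutative ring rather than a field. Both are polynomial identities in the matrix entries with integer coefficients. They therefore hold over $\mathbb{Z}[x_{ij}]$, and hence over every commutative ring by specialization, or one can simply cite a standard textbook such as \cite{AM}.
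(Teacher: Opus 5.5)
Your proof is correct, and it differs from the paper's only in how the pieces are organized. The paper cites textbooks for the first two claims. For the third, it argues directly: an integer inverse of $A$ reduces to an inverse modulo $m$. Conversely, if $\det A\neq\pm1$, it picks a prime $p$ dividing $\det A$ (any prime works when $\det A=0$), so that $A$ is singular over the field $\mathbb{Z}/p\mathbb{Z}$ by the first claim.

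You instead prove one general criterion over an arbitrary commutative ring: a square matrix is invertible if and only if its determinant is a unit. You get this from the adjugate identity and multiplicativity of $\det$, and then obtain all three claims by specializing the ring. This is what lets you take $m=|\det A|$ directly. The field criterion alone would not cover that choice, since $\mathbb{Z}/|\det A|\mathbb{Z}$ is generally not a field.

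Your route is self-contained and more general. The paper's route needs only the field case, plus the existence of a prime divisor, which it uses to land in a field.
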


\begin{proof}
    The first part is \cite[Chapter VII, Theorem 5.1]{Alin}, the second one is \cite[Excercise 2.109]{A}. Let us prove the third one. If $A$ is invertible over $\mathbb{Z}$, then its inverse matrix modulo $m$ is the inverse matrix to $A \mod m$, hence $A \mod m$ is invertible. If $\det A \ne \pm 1$, then there exists a prime divisor $p$ of $\det A$, and hence $\det A =0 \mod p$. Therefore, by the first statement of the proposition, $A$ is not invertible.
\end{proof}

The following proposition is a standard corollary of Lagrange's theorem for groups \cite[Theorem 4.50 and Corollary 4.69]{A}. 
\hypertarget{algfingr}{} 
\begin{proposition} \label{algfingr} 
    Let $G$ be a finite group, then for each $g \in G$ there exists $n$ such that  $g^n=e$, and  $|G|$ is divisible by  $n$. 
\end{proposition}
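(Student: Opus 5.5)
The statement concerns a finite group $G$ and an element $g\in G$. I will show that some positive power of $g$ is trivial, and that the least such exponent $n$ (the order of $g$) divides $|G|$. The plan is to combine the pigeonhole principle with Lagrange's theorem applied to the cyclic subgroup generated by $g$.

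\emph{Existence.} Consider the sequence $g, g^2, g^3, \ldots$ inside the finite set $G$. By the pigeonhole principle there are indices $i<j$ with $g^i=g^j$. Multiplying both sides by $g^{-i}$ gives $g^{j-i}=e$, so a positive exponent with $g^k=e$ exists. Let $n$ be the least such exponent.

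\emph{The cyclic subgroup.} Next I will show that $H=\{e,g,\ldots,g^{n-1}\}$ is a subgroup of order exactly $n$.
\begin{itemize}
\item The elements are pairwise distinct: if $g^a=g^b$ with $0\le a<b<n$, then $g^{b-a}=e$ with $0<b-a<n$, contradicting the minimality of $n$.
\item $H$ is closed under multiplication: $g^ag^b=g^{(a+b)\bmod n}$, which follows from writing $a+b=qn+r$ and using $g^n=e$.
\item $H$ contains inverses: $(g^a)^{-1}=g^{n-a}$ for $0<a<n$, and $e^{-1}=e$.
\end{itemize}

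\emph{Divisibility.} By Lagrange's theorem, $|H|=n$ divides $|G|$. For completeness, this holds because the left cosets $xH$ partition $G$, and each coset has exactly $|H|$ elements, since the map $h\mapsto xh$ is a bijection from $H$ onto $xH$.

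No step is a real obstacle. The only care needed is to pick $n$ minimal, because an arbitrary exponent with $g^n=e$ need not divide $|G|$. Alternatively, one can take $n=|G|$ directly: the map $x\mapsto gx$ permutes $G$, and in the abelian case comparing the products $\prod_{x\in G} x$ and $\prod_{x\in G} gx$ gives $g^{|G|}=e$. The general case still requires the coset argument, so I would follow the route above.
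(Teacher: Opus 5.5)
Your proof is correct and is the standard argument the paper relies on. The paper gives no proof of its own and simply cites this as a corollary of Lagrange's theorem. Your route matches that: take $n$ to be the order of $g$ (found by pigeonhole), note that the cyclic subgroup $\{e,g,\ldots,g^{n-1}\}$ has exactly $n$ elements, and conclude by Lagrange's theorem.
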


\hypertarget{algbas=det1}{} 
\begin{proposition} \label{algbas=det1}
    Let $A_1, \ldots, A_n$ be vectors of dimension $n$ over a field. Then they are linearly dependent if and only if the determinant of the matrix constructed from these vectors is equal to 0. If the vectors have integer entries, then they form a spanning set of $\mathbb{Z}^n$ if the determinant of this matrix is $\pm 1$. %
   
\end{proposition}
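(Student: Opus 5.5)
The first statement is standard linear algebra: $n$ vectors in an $n$-dimensional space over a field are linearly dependent exactly when the matrix $M$ having them as columns is singular. By Proposition \ref{algdet1} (first part) this happens exactly when $\det M = 0$. I would simply cite this, possibly adding that a nontrivial solution of $Mx=0$ is precisely a nontrivial vanishing linear combination of the columns.

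For the second statement, let $M$ be the integer matrix whose columns are $A_1,\ldots,A_n$, and suppose $\det M=\pm 1$. By the second part of Proposition \ref{algdet1}, $M$ is invertible over $\mathbb{Z}$, so $M^{-1}$ has integer entries. To see this concretely, recall that $M^{-1} = (\det M)^{-1}\operatorname{adj}(M) = \pm \operatorname{adj}(M)$, and the adjugate consists of cofactors, which are integer determinants. Now take any $v\in\mathbb{Z}^n$ and set $x = M^{-1}v\in\mathbb{Z}^n$. Then $v = Mx = \sum_i x_i A_i$ is an integer combination of the $A_i$. Hence $A_1,\ldots,A_n$ generate $\mathbb{Z}^n$ as an additive group.

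There is no real obstacle in this argument. The only point requiring care is that the inverse has integer entries, which the adjugate formula settles immediately. If desired, one can also note the converse: if the $A_i$ span $\mathbb{Z}^n$, then $MN=I$ for some integer matrix $N$, so $\det M\det N=1$ and therefore $\det M=\pm1$. The statement only asks for the "if" direction, however.
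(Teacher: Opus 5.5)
Your proof is correct and is essentially the paper's argument. You cite the standard fact for the first part, and for the second you use integer invertibility of the matrix (Proposition \ref{algdet1}) to write any $v\in\mathbb{Z}^n$ as $M(M^{-1}v)=\sum_i (M^{-1}v)_i A_i$; your adjugate remark only makes explicit why $M^{-1}$ is integral, which the paper takes directly from Proposition \ref{algdet1}.
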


\begin{proof}
    The first part is  \cite[ch. 7, Theorem 3.1]{Alin}. We now prove the second part.
    
    By \hyperlink{algdet1}{Proposition \ref{algdet1}}, this matrix is invertible. Now we are going to prove that each vector $v \in \mathbb{Z}^n$ can be 
    expressed as a linear combination of the vectors $A_i$. Note that $A\cdot A^{-1}\cdot v = v$, and we denote $(u_1,\ldots, u_n) = u = A^{-1}v$. Then on one hand $Au = v$, and on the other hand $Au = \sum \limits_{1 \le i \le n} u_iA_i$. This gives a presentation of $v$ as a linear combination of the vectors  $A_i$.
\end{proof}

\hypertarget{algdet0}{} 
\begin{proposition} \label{algdet0}
    Let $A$ be a matrix $n \times n$ over $\mathbb{Z}$ with $\det A \ne \pm 1$. 
    Then $A(\mathbb{Z}/m\mathbb{Z})^n\neq (\mathbb{Z}/m\mathbb{Z})^n$ for some integer $m$. In other words, there exists a vector $v \in (\mathbb{Z}/m\mathbb{Z})^n$ such that $Au \ne v$ for each $u \in (\mathbb{Z}/m\mathbb{Z})^n$. 
\end{proposition}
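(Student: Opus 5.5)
We need to prove Proposition \ref{algdet0}: if $\det A \ne \pm 1$, then for some integer $m$ the map $u \mapsto Au$ on $(\mathbb{Z}/m\mathbb{Z})^n$ is not surjective. The natural choice is to take $m$ to be a prime divisor of $\det A$. We handle the cases $\det A = 0$ and $\det A \ne 0$ together.

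\textbf{Choice of $m$.} If $\det A \ne 0$, then $|\det A| \ge 2$ by hypothesis, so it has a prime divisor $p$. If $\det A = 0$, any prime works, say $p=2$. In either case $\det A \equiv 0 \pmod p$. Set $m = p$, so that $\mathbb{Z}/m\mathbb{Z} = \mathbb{F}_p$ is a field.

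\textbf{Key step.} The determinant commutes with reduction modulo $p$, since it is a polynomial with integer coefficients in the entries. Hence the reduced matrix $\bar A$ over $\mathbb{F}_p$ has $\det \bar A = 0$. By the first part of Proposition \ref{algdet1}, $\bar A$ is not invertible over $\mathbb{F}_p$. By the first part of Proposition \ref{algbas=det1}, applied over the field $\mathbb{F}_p$, the columns of $\bar A$ are linearly dependent. Therefore they span a proper subspace of $\mathbb{F}_p^n$. Equivalently, rank--nullity shows that $\bar A$ has nontrivial kernel, so its image has dimension less than $n$.

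\textbf{Conclusion.} The image $\bar A(\mathbb{F}_p^n)$ is exactly the span of the columns of $\bar A$. Since this span is a proper subspace, some $v \in (\mathbb{Z}/p\mathbb{Z})^n$ lies outside it. For this $v$ we have $Au \ne v$ for every $u \in (\mathbb{Z}/p\mathbb{Z})^n$, as required.

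The only delicate point is that we must work modulo a prime, so that linear algebra over a field applies. If instead we took $m = \det A$ directly, we would need the Smith normal form or a finiteness/counting argument. That route is also possible, via injectivity failing on a finite set, but the prime reduction avoids it.
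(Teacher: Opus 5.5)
Your proof is correct and follows the paper's argument essentially verbatim: reduce $A$ modulo a prime $p$ dividing $\det A$, so that the reduced determinant vanishes. Then the $n$ columns are linearly dependent over $\mathbb{Z}/p\mathbb{Z}$ and cannot span $(\mathbb{Z}/p\mathbb{Z})^n$, so $m=p$ works; your explicit handling of $\det A = 0$ is just a small clarification the paper leaves implicit, since every prime divides $0$.
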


\begin{proof}
    Assume that $\det A =k$; we let $p$ be a prime divisor of $k$. Consider the matrix $A$ with its entries taken modulo $p$ as a matrix over the field  $\mathbb{Z}/p\mathbb{Z}$. The determinant of the new matrix is equal to  0, so, by \hyperlink{algbas=det1}{Proposition \ref{algbas=det1}}, the vectors of the matrix are linearly dependent.  So, since we had $n$ vectors of dimension $n$, they do not form a spanning set, so we can indeed find such a vector $v$. Therefore, we can choose $m=p$ (in fact, in the statement we can choose $m$ to be prime, although we do not really need it).
\end{proof}

We will make use of the following proposition, which is less classical than the previous statements from this section. Given a set of vectors $V$ and an integer $m$, we let $V/mV$ denote this set of vectors modulo $m$, i.e., we take each coordinate of a vector modulo $m$. 

\hypertarget{algn=allp}{}
\begin{proposition} \label{algn=allp}
    Let $V$ be a set of vectors from $\mathbb{Z}^n$.
    Then $V$ generates $\mathbb{Z}^n$ as an additive group if and only if $V/pV$ generates $(\mathbb{Z}/p\mathbb{Z})^n$ for each prime $p$.
\end{proposition}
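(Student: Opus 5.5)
The forward direction is immediate: reduction modulo $p$ is a surjective group homomorphism $\pi_p\colon\mathbb{Z}^n\to(\mathbb{Z}/p\mathbb{Z})^n$. If $V$ generates $\mathbb{Z}^n$, then every $\bar x\in(\mathbb{Z}/p\mathbb{Z})^n$ lifts to some $x\in\mathbb{Z}^n$. Write $x$ as an integer combination of elements of $V$ and reduce modulo $p$; this expresses $\bar x$ as a combination of elements of $V/pV$.

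For the converse, let $L\subseteq\mathbb{Z}^n$ be the subgroup generated by $V$, and assume $L\neq\mathbb{Z}^n$. The aim is a prime $p$ with $\pi_p(L)\neq(\mathbb{Z}/p\mathbb{Z})^n$. Note that $\pi_p(L)$ is exactly the subgroup of $(\mathbb{Z}/p\mathbb{Z})^n$ generated by $V/pV$.

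First, $L$ is finitely generated: it is a subgroup of a finitely generated abelian group, or, alternatively, one can pass to a finite subset of $V$ generating the same $L$. So $L$ is the column span of some integer $n\times k$ matrix $B$.

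The cleanest tool is Smith normal form: there are $P\in GL_n(\mathbb{Z})$ and $Q\in GL_k(\mathbb{Z})$ such that $PBQ$ is diagonal with entries $d_1\mid d_2\mid\dots\mid d_r$ and zeros elsewhere. Since $P$ is invertible over $\mathbb{Z}$, we have $P L=\bigoplus_{i\le r} d_i\mathbb{Z}\oplus 0^{n-r}$. There are two cases.
\begin{itemize}
\item If $r<n$, take any prime $p$. Then $\pi_p(PL)$ lies in the proper subspace where the last coordinate is $0$.
\item If $r=n$, then $L\ne\mathbb{Z}^n$ forces some $d_i\neq\pm1$. Take a prime $p\mid d_i$; the $i$-th coordinate of every element of $\pi_p(PL)$ vanishes.
\end{itemize}
In both cases $\pi_p(PL)\neq(\mathbb{Z}/p\mathbb{Z})^n$. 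Since $P \bmod p$ is invertible over $\mathbb{Z}/p\mathbb{Z}$ (Proposition \ref{algdet1}, as $\det P=\pm1$), it follows that $\pi_p(L)\neq(\mathbb{Z}/p\mathbb{Z})^n$.

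The main obstacle is justifying the reduction to a diagonal form, so to stay closer to the paper's elementary toolkit I would use an alternative argument avoiding Smith normal form. Consider the quotient $G=\mathbb{Z}^n/L$, a nontrivial finitely generated abelian group.
\begin{itemize}
\item If $G$ is infinite, then $L$ has rank less than $n$. There is then a nonzero integer vector $y$, taken primitive, with $y\cdot v=0$ for all $v\in L$, and for any prime $p$ the vector $y \bmod p$ is nonzero and kills $\pi_p(L)$.
\item If $G$ is finite and nontrivial, it has an element of prime order $p$, by Cauchy's theorem or Proposition \ref{algfingr} applied to a power of a nonzero element. Then $G/pG$ is nonzero, and $G/pG\cong(\mathbb{Z}/p\mathbb{Z})^n/\pi_p(L)$ because $\mathbb{Z}^n/(L+p\mathbb{Z}^n)\cong(\mathbb{Z}^n/p\mathbb{Z}^n)/\pi_p(L)$.
\end{itemize}
This isomorphism handles both cases uniformly. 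It only remains to check that $G\neq0$ implies $G/pG\neq0$ for some prime $p$, which holds for finitely generated abelian groups: a surjection of $G$ onto $\mathbb{Z}$ or onto $\mathbb{Z}/p\mathbb{Z}$ exists. This last fact is the step I would cite from \cite{A} rather than reprove.
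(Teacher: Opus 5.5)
Your proof is correct, and your main route differs from the paper's. The paper never diagonalizes a generating matrix. It applies the structure theorem to $H=\mathbb{Z}^n/G$ to get a prime $p$ with $H/pH\neq 0$: any $p$ if $H$ has a free summand, or $p=p_i$ if it has a summand $\mathbb{Z}/p_i^{\alpha_i}\mathbb{Z}$. It then uses the isomorphism theorems to identify $H/pH$ with the quotient of $(\mathbb{Z}/p\mathbb{Z})^n$ by the subgroup generated by $V/pV$. Your fallback argument is essentially this proof, with the structure theorem replaced by a finite/infinite case split together with your isomorphism $\mathbb{Z}^n/(L+p\mathbb{Z}^n)\cong(\mathbb{Z}^n/p\mathbb{Z}^n)/\pi_p(L)$.

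Smith normal form is the same mathematics in matrix form, but it is more explicit. It names the witnessing prime: any prime if $r<n$, otherwise a prime divisor of an invariant factor $d_i\neq\pm1$. It also gives an explicit hyperplane containing $\pi_p(L)$, so for a finite generating set the failing primes can be read off directly. Its cost is that you must first know $L$ is finitely generated. The quotient route only needs $\mathbb{Z}^n/L$ to be finitely generated, which is automatic even when $V$ is infinite, as for the set of returns.

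Your formulation with $\pi_p(L)$ is also more accurate than the paper's. The paper calls the subgroup generated by $V/pV$ ``$G/pG$'' and asserts $pG=G\cap p\mathbb{Z}^n$; both fail for $G=p\mathbb{Z}^n$. Its final chain of isomorphisms, written with $B\cap C$, is nevertheless correct.

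Two small points in your fallback:
\begin{itemize}
\item In the finite case, an element of order $p$ gives $G/pG\neq0$ only via the remark that $x\mapsto px$ is then non-injective, hence non-surjective on the finite group $G$. State this.
\item Once that is said, your two cases already prove that $G\neq0$ implies $G/pG\neq0$ for some prime $p$, so nothing is left to cite.
\end{itemize}
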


\begin{proof}
    If $V$ generates $\mathbb{Z}^n$, then, in particular, a unit vector $e_i = (0, \ldots, 1, \ldots, 0)$ with  $i$-coordinate equal to $1$ and all other coordinates equal to $0$ can be expressed as a linear combination of vectors from $V$. In other words, there exist integers $a_{i,j}$ such that $e_i=a_{i,0}v_0+\ldots+a_{i,n-1}v_{k-1}$, where $v_0, \ldots, v_{k-1}$ are vectors from $V$ for some $k \in \mathbb{N}$. 
    Then for each modulo $p$ the same linear combination $a_{i,0}v_0+\ldots+a_{i,k-1}v_{k-1}$ 
    gives the same vector $e_i$ modulo $p$. 
    So, for each $p$ and each $e_i$  we can express $e_i$ as a linear combination of vectors from  $V/pV$, and hence we can generate $(\mathbb{Z}/p\mathbb{Z})^n$.
    
    Suppose now that $V/pV$ generates $(\mathbb{Z}/p\mathbb{Z})^n$ for each integer $p$. We let $G$ denote the subgroup of $\mathbb{Z}^n$ generated by $V$. Note that the group generated by $V/pV$ is equal to $G/pG$, hence $G/pG = (\mathbb{Z}/p\mathbb{Z})^n$ 
    and the factor group $(\mathbb{Z}/p\mathbb{Z})^n/(G/pG)$ is isomorphic to the trivial group $0$. 
    
    Consider the factor group $H = \mathbb{Z}^n/G$. Note that $H$ is abelian finitely generated group, so the classification of abelian finitely generated groups \cite[Theorem 9.28]{A} implies that
    $H$ can be expressed in the following form: \begin{equation}\label{eq:H} H \cong \mathbb{Z}^k \oplus \mathbb{Z}/p_1^{\alpha_1}\mathbb{Z}\oplus \dots \oplus \mathbb{Z}/p_l^{\alpha_l}\mathbb{Z}\end{equation} for some $k, \alpha_1, \ldots, \alpha_l \in \mathbb{N}$ 
    and some prime numbers $p_1, \ldots, p_l$ ($p_i$'s do not have to be pairwise distinct). We will now prove that $H \cong 0$, which implies, in particular, that $G = \mathbb{Z}^n$. Assume the converse: Suppose that $H \ncong 0$. 
    
    We will make use of the following claim:  
    
    \bigskip
    
    \textit{Claim.} Let $B \le A, D \le C$ be abelian groups and their subgroups. Then   $$(A \oplus C)/(B \oplus D) \cong A / B \oplus C / D.$$
    
    The claim is proved by building a natural isomorphism mapping a class $(a,c) + B \oplus D$ to a direct sum of the classes $(a+B) \oplus (c+D)$.

    \bigskip
    
    Our current goal is to prove that $H/pH \ncong 0$ for some prime $p$. Due to \eqref{eq:H}, we have that $pH \cong p\mathbb{Z}^k \oplus p(\mathbb{Z}/p_1^{\alpha_1}\mathbb{Z})\oplus \dots \oplus p(\mathbb{Z}/p_l^{\alpha_l}\mathbb{Z})$. Now, to describe $H/pH$, it is enough to describe $\mathbb{Z}^k/p\mathbb{Z}^k$ and $(\mathbb{Z}/p_i^{\alpha_i}\mathbb{Z})/p(\mathbb{Z}/p_i^{\alpha_i}\mathbb{Z})$, using the isomorphism described above. 
    Using this isomorphism and the fact that  $\mathbb{Z}^k = \mathbb{Z} \oplus \dots \oplus \mathbb{Z}$ we obtain that $\mathbb{Z}^k/p\mathbb{Z}^k \cong (\mathbb{Z}/p\mathbb{Z})^k$. Summands of the form $p(\mathbb{Z}/p_i^{\alpha_i}\mathbb{Z})$ in $pH$ can be rewritten as follows:
    
    \begin{equation*}
        p(\mathbb{Z}/p_i^{\alpha_i}\mathbb{Z}) = \begin{cases}
            \mathbb{Z}/p_i^{\alpha_i}\mathbb{Z}, \mbox{ if $p \ne p_i$,} \\ \mathbb{Z}/p_i^{\alpha_i-1}\mathbb{Z}, \mbox{ if $p = p_i$.}
        \end{cases}
    \end{equation*}
    Then $(\mathbb{Z}/p_i^{\alpha_i}\mathbb{Z})/p(\mathbb{Z}/p_i^{\alpha_i}\mathbb{Z}) \cong 0$, if $p_i \ne p$, and $(\mathbb{Z}/p_i^{\alpha_i}\mathbb{Z})/p(\mathbb{Z}/p_i^{\alpha_i}\mathbb{Z}) \cong \mathbb{Z}/p\mathbb{Z}$, if $p_i = p$.
    
    If $k \ne 0$, then $(\mathbb{Z}/p\mathbb{Z})^k$ is a summand in $H/pH$ and it is not isomorphic to $0$, so $H/pH \ncong 0$. If $l \ne 0$, then we choose as $p$ one of the numbers $p_i$, hence $H/pH$ contains a summand $\mathbb{Z}/p\mathbb{Z} \ncong 0$, and thus $H/pH \ncong 0$. If $k=l=0$, then $H \cong 0$, which we assumed to be false.
    
    To obtain a contradiction, it remains to prove that the factor group $(\mathbb{Z}/p\mathbb{Z})^n/(G/pG)$ is isomorphic to the factor group $H/pH$, since we already showed that the first one is isomorphic to 0, while the second one is not.
    
    We now let $A$ denote $\mathbb{Z}^n$, $B$ denote $G$, $C$ denote $p\mathbb{Z}^n$ and $D$ denote $pG$. Since  $(\mathbb{Z}/p\mathbb{Z})^n \cong \mathbb{Z}^n/p\mathbb{Z}^n=A/C$ and $pH \cong p\mathbb{Z}^n/pG = C/D$, 
    it remains to verify that $(A/C)/(B/D) \cong (A/B)/(C/D)$, and $D = B \cap C$. For that, we make use of the following theorems about isomorphisms \cite[Proposition 2.1]{AM}: 
    
    For any subgroups $H'$ and $K'$ of a group $G'$ the following properties hold: 
    \begin{enumerate}
        \item $H'K' / H' \cong K'/(K' \cap H')$,
        \item If $H' \unlhd K'$, then $ (G'/K') / (H'/K') \cong G'/H'$.
    \end{enumerate}
    
    Now it is easy to prove that our groups are isomorphic: 
    
    $$(A/C)/(B/(B \cap C)) \cong (A/C)/(BC/C) \cong A/BC \cong (A/B)/(BC/B) \cong (A/B)/(C/(B \cap C)),$$
    where the first and the fourth isomorphisms follow from Property 1, and the second and the third isomorphisms follow from Property 2.
\end{proof}

\hypertarget{secdet1}{}
\section{Sufficiency of the condition 
for the WELLDOC property} \label{secdet1}

In this section, we prove the sufficiency of the condition for the WELLDOC property from Theorem \ref{thdet1bin}. More precisely, we prove the sufficiency of the condition on the determinant of the matrix of the morphism in the binary case (Theorem \ref{thdet1>wbin}) and its generalization for a non-binary case  (Theorem \ref{thdet1}). The proof is provided in Subsection \ref{subsec:suf_proof}; we then discuss as an example morphic episturmian words in Subsection \ref{subsec:suf_ex}, and finally we prove the decidability of the condition in Subsection \ref{subsec:suf_dec}.

\subsection{Proof of sufficiency of the condition 
for the WELLDOC property} \label{subsec:suf_proof}

Recall that we fixed the alphabet to be  $\Sigma = \{0, 1, \ldots, \sigma-1 \}$ and $|\Sigma| = \sigma$.

\begin{defn}
    An infinite word $w$ satisfies {\it WELLDOC for 0} if for each integer $m$ the set $X_0$ satisfies $X_{0,m} = (\mathbb{Z}/m\mathbb{Z})^\sigma$. 
\end{defn}

In other words, we check the WELLDOC property only for one factor $0$. The following lemma states that for words generated by morphisms with $\det A_\phi = \pm 1$ this condition is equivalent to the WELLDOC property.

\hypertarget{lmonly0}{}

\begin{lm} \label{lmonly0}
    Let $w$ be an infinite word beginning with $0$ generated by a morphism $\phi$ with matrix $A_\phi$ such that $\det A_\phi = \pm 1$. Then for the word $w$ the WELLDOC property is equivalent to the WELLDOC for $0$ property.
\end{lm}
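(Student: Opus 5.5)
One direction is immediate: WELLDOC quantifies over all factors $u$, and $0$ is a factor of $w$ because $w$ begins with $0$. So the content of the lemma is the converse. We assume WELLDOC for $0$, fix an arbitrary factor $u$ of $w$, an integer $m$ and a target vector $v\in(\mathbb{Z}/m\mathbb{Z})^\sigma$, and we must find an occurrence of $u$ whose preceding prefix $p$ satisfies $V_p\equiv_m v$.

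The plan is to push occurrences of $0$ forward through a power of $\phi$. Since $w=\phi^n(w)$ and $w$ begins with $0$, every $\phi^n(0)$ is a prefix of $w$. Because $u$ is a factor of $w$, it occurs inside some prefix of $w$, and hence inside $\phi^n(0)$ for $n$ large enough. Fix such an $n$, and write $\phi^n(0)=xuy$, so that $u$ sits at offset $|x|$ in $\phi^n(0)$.

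Now take an occurrence of $0$ at position $a$ in $w$, with preceding prefix $q=w[0,a)$. Since $w=\phi^n(w)$, the image $\phi^n(w[0,a])=\phi^n(q)\phi^n(0)$ is a prefix of $w$. So $u$ occurs in $w$ preceded by the prefix $\phi^n(q)x$, whose Parikh vector is
\[
V_{\phi^n(q)x}=A_\phi^n V_q+V_x .
\]
Here I use the standard identity $V_{\phi(z)}=A_\phi V_z$, reading vectors as columns when multiplying by a matrix.

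It remains to choose $q$. Since $\det A_\phi=\pm1$, Proposition~\ref{algdet1} shows that $A_\phi^n$ is invertible modulo $m$. So we may set $t\equiv_m A_\phi^{-n}(v-V_x)$. By WELLDOC for $0$, there is an occurrence of $0$ whose preceding prefix $q$ satisfies $V_q\equiv_m t$. Then $V_{\phi^n(q)x}\equiv_m A_\phi^n t+V_x\equiv_m v$, which is exactly what is needed. Since $u$, $m$ and $v$ were arbitrary, $w$ has the WELLDOC property.

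The only step needing real care is the bookkeeping: the image of a prefix under $\phi^n$ is again a prefix of $w$ (because $w$ is a fixed point), and the matrix acts on Parikh vectors by $V_{\phi(z)}=A_\phi V_z$. Both are routine. The essential ingredient is the invertibility of $A_\phi$ modulo every $m$, supplied by Proposition~\ref{algdet1}.
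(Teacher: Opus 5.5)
Your proof is correct and is essentially the paper's argument. The paper inducts on $n$, applying $\phi$ once per step to a prefix preceding $\phi^{n}(0)$ and taking $t = A_\phi^{-1}(s - V_x)$; you instead collapse the induction into a single application of $\phi^n$ to a prefix preceding $0$ and invert $A_\phi^n$ modulo $m$, which is equivalent and slightly more direct.
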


\begin{proof}
    Clearly, WELLDOC implies WELLDOC for 0; we need to show the converse. Consider an arbitrary factor $u$ of $w$; we let $k$ denote the position of the end of its first occurrence. Consider an integer $n$ such that $|\phi^n(0)| > k$, then $u$ is a factor of  $\phi^n(0)$. 
    Given $m$, we now prove by induction on $n$ the following fact: the WELLDOC property is satisfied for each factor $v$ such that its first occurrence is contained in $\phi^n(0)$. The base of induction is for $n=0$, and this is exactly WELLDOC for 0.
    
    Suppose that we have WELLDOC for all factors of $\phi^n(0)$, in particular, for $\phi^n(0)$ itself; we need to prove it for a factor $u$ of $\phi^{n+1}(0)$, i.e.,  $\phi^{n+1}(0)=xuy$ for some $x,y \in \Sigma^*$. Let $s = (s_0, \ldots, s_{\sigma-1})$ be an arbitrary vector from $(\mathbb{Z}/m\mathbb{Z})^{\sigma}$. To prove that it is in $X_{u,m}$, we need to find a prefix preceding $u$ with such Parikh vector $V$
    modulo $m$.
    By induction hypothesis, we have that for each vector  $t = (t_0, \ldots, t_{\sigma-1}) \in (\mathbb{Z}/m\mathbb{Z})^{\sigma}$ there exists a prefix $p$ preceding $\phi^n(0)$ with its Parikh vector $V_p \equiv_m t$. Consider the word $\phi(p)$: it is a prefix of $w$ with the following Parikh vector: $V_{\phi(p)} \equiv_mA_\phi t$. This prefix precedes the factor $\phi(\phi^n(0))=\phi^{n+1}(0)$. 
    
    Note that $\phi(p\phi^n(0))=\phi(p)\phi^{n+1}(0)=\phi(p)xuy$, so $\phi(p)x$ is a prefix preceding $u$. Its Parikh vector is  $V_{\phi(p)x} \equiv_m A_\phi t + V_x$. We can obtain this way any vector $s$ modulo $m$; for that we should take $t = A^{-1}_\phi (s-V_x)$. The inverse matrix is an integer matrix by  \hyperlink{algdet1}{Proposition \ref{algdet1}}, since the determinant of $A_{\phi}$ is equal to  $\pm 1$ for each $m$.
\end{proof}

To prove the sufficiency of the condition in  \hyperlink{thdet1}{Theorem \ref{thdet1bin}}, it remains to prove that $X_{0,m}$ coincides with  $(\mathbb{Z}/m\mathbb{Z})^\sigma$. We first prove some properties of the set $X_0$.

\hypertarget{lmsubgr}{}
\begin{lm} \label{lmsubgr}
    For a recurrent infinite word $w$ generated by a morphism and beginning with $0$, the set $X_{0,m}$ is a subgroup of $(\mathbb{Z}/m\mathbb{Z})^\sigma$.
    In particular, if we have  $(s_0, \ldots, s_{\sigma-1}), (t_0, \ldots, t_{\sigma-1}) \in X_{0,m}$, then $(s_0+t_0, \ldots, s_{\sigma-1}+t_{\sigma-1}) \in X_{0,m}$.
\end{lm}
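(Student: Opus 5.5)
Since $(\mathbb{Z}/m\mathbb{Z})^\sigma$ is finite, a nonempty subset closed under addition is automatically a subgroup. Indeed, if $x$ lies in it, then so do all multiples of $x$; by Proposition \ref{algfingr} some multiple $nx$ equals $0$, and $(n-1)x=-x$. So I only need to show that $X_{0,m}$ is nonempty (it contains $V_\varepsilon=0$, since $w$ begins with $0$) and closed under addition.

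For closure, take prefixes $p$ and $q$ of $w$, each followed by an occurrence of $0$, with $V_p\equiv_m s$ and $V_q\equiv_m t$. The plan is to find a prefix $P$ preceding an occurrence of $0$ such that $V_P\equiv_m V_p+V_q$.

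I would use the morphism. Because $w=\phi(w)$ and $w$ begins with $0$, every occurrence of $0$ at position $i$ gives an occurrence of $\phi^k(0)$ at position $|\phi^k(w[0,i))|$. The Parikh vector of that position is $A_\phi^k V_{w[0,i)}$. The sequence $A_\phi^k \bmod m$ takes finitely many values in a finite monoid, so it is eventually periodic. Hence there are infinitely many $k$ with $A_\phi^k\equiv_m A_\phi^{k+\ell}$, and in particular some $k\ge 1$ with $A_\phi^{k}$ idempotent modulo $m$ (a standard finite-semigroup fact). The catch is that $A_\phi^k$ need not be the identity, so I would not rely on applying $\phi^k$ to $p$ directly.

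The more direct route uses recurrence. The prefix $q0$ of $w$ occurs again, and in fact infinitely often. I would try to find an occurrence of $q0$ at a position $j$ with $V_{w[0,j)}\equiv_m V_p$. Then $P=w[0,j)q$ precedes $0$ and has $V_P\equiv_m V_p+V_q$. So the key claim is: for every prefix $r$ of $w$ ending just before a $0$ (here $r=q$, noting $q0$ is a prefix), the set of Parikh vectors mod $m$ of positions where the prefix $q0$ occurs contains $X_{0,m}$. This is where the morphism enters. Choose $N$ with $q0$ a prefix of $\phi^N(0)$, and also with $A_\phi^N$ idempotent mod $m$ and such that $A_\phi^N$ acts as the identity on $X_{0,m}$ mod $m$. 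Whenever $0$ occurs after a prefix $p'$, the word $\phi^N(0)$, and hence $q0$, occurs after $\phi^N(p')$, whose Parikh vector is $A_\phi^N V_{p'}$.

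The main obstacle is making $A_\phi^N$ act as the identity on the relevant vectors. My first attempt would be to replace $p$ by a prefix $p'$ preceding $0$ whose vector is $A_\phi^{N}$-fixed and congruent to $V_p$. Observe that $V_p$ itself lies in $A_\phi^{jN}$-images: $0$ at $p$ arises from $\phi^{jN}$ applied to an earlier $0$, because $0$ is the first letter of $\phi^{jN}(0)$ only when the preimage letter is $0$ in that position. If that fails, I would fall back on periodicity. The map $x\mapsto A_\phi x$ sends $X_{0,m}$ into itself (the positions of $0$ that are starts of $\phi(0)$-blocks), and on the finite set $X_{0,m}$ a suitable power of $A_\phi$ is idempotent with image the eventual image $Y$. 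I would prove closure on $Y$ first, then lift to all of $X_{0,m}$ using shifts by $V_x$ as in the proof of Lemma \ref{lmonly0}. I expect this lifting step to be the genuinely hard part.
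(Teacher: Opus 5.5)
Your reduction to ``nonempty and closed under addition'' is fine, but closure is never proved. The ``key claim'' about occurrences of $q0$, the choice of $N$ with $A_\phi^N$ acting as the identity on $X_{0,m}$, and the lifting from the eventual image $Y$ are all left open. The assertion that each $0$ arises from $\phi^{jN}$ applied to an earlier $0$ is also false, since a $0$ can sit inside a block, for instance as the last letter of $\phi(1)=10$.

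More seriously, these steps cannot be completed, because without an invertibility hypothesis on $A_\phi$ the statement is false. Take Thue--Morse, $\phi(0)=01$, $\phi(1)=10$, $w=0110100110010110\cdots$, which is recurrent and begins with $0$. Every prefix $p$ satisfies $|p|_0-|p|_1\in\{-1,0,1\}$, since even-length prefixes are concatenations of $01$ and $10$. The prefix $011$ precedes the $0$ at position $3$, so $(1,2)\in X_{0,4}$. But $(1,2)+(1,2)\equiv_4(2,0)$ would require a prefix with $|p|_0-|p|_1\equiv 2 \pmod 4$, which is impossible. Your key claim fails here concretely: $0110$ occurs only at even positions, because $11$ occurs only at odd positions, so every prefix preceding $q0=0110$ has equal letter counts. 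Likewise, no positive power of $A_\phi=\left(\begin{smallmatrix}1&1\\1&1\end{smallmatrix}\right)$ fixes $(1,2)$ modulo $4$.

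The missing ingredient is exactly the route you set aside. The lemma is only ever applied under $\det A_\phi=\pm1$ (Lemma \ref{lmonly0}, Theorem \ref{thdet1}), and the paper's proof uses this, though the statement omits it. Then $A_\phi \bmod m$ lies in the finite group $GL(\sigma,\mathbb{Z}/m\mathbb{Z})$, so by Proposition \ref{algfingr} there is $n\ge1$ with $A_\phi^n\equiv_m I$.

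Choose $k$ with $q0$ a prefix of $\phi^k(0)$. The prefix $\phi^{nk}(p)$ is followed by $\phi^{nk}(0)$, which begins with $q0$. Also $V_{\phi^{nk}(p)}=A_\phi^{nk}V_p\equiv_m V_p$. Hence $\phi^{nk}(p)q$ precedes a $0$ and has Parikh vector $\equiv_m V_p+V_q$.

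So add the hypothesis $\gcd(\det A_\phi,m)=1$ (e.g.\ $\det A_\phi=\pm1$) and apply $\phi^{nk}$ to $p$ directly. The idempotent and recurrence detour is then unnecessary, and without that hypothesis it cannot work.
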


\begin{proof}
    To prove that $X_{0,m}$ is a subgroup of $(\mathbb{Z}/m\mathbb{Z})^\sigma$, we need to prove its closure under addition and taking the inverse. Note that the closure under the inverse is a consequence of the closure under addition: the inverse of an element $v \in X_{0,m}$ in $(\mathbb{Z}/m\mathbb{Z})^\sigma$ is $(m-1)v$, which can be represented as a sum  of elements from $X_{0,m}$: $(m-1)v = \underbrace{v+\ldots+v}_{m-1 \text{ times}}$.

    Let $p0$ and $q0$ be prefixes of $w$. We need to find a prefix $r0$ such that $V_r \equiv_m V_p+V_q$. Take an integer $k$ such that $q$ is a factor of $\phi^k(0)$. For that we will use several basic algebraic facts. 
    
    Consider the group $GL(\sigma, \mathbb{Z}/m\mathbb{Z})$ of invertible   $\sigma\times\sigma$-matrices modulo $m$. It is finite, so we can apply \hyperlink{algfingr}{Proposition \ref{algfingr}}; it follows that there exists an integer $n$ such that $A_\phi^n \equiv_m I_{\sigma}$. Then 
     $V_{\phi^n(p)} \equiv_m V_p$
     , and if $p0$ is a prefix, then $\phi^n(p)0$ is also a prefix, since $\phi(0)$ begins with the letter $0$. So, $\phi^{nk}(p0)=\phi^{nk}(p)\phi^{nk}(0)$. Here 
    $\phi^{nk}(p)$ has a Parikh vector equivalent to $V_p$ modulo $m$, and $\phi^{nk}(0)$ starts with $q0$, since $nk > k$ and $q0 $ is a prefix of  $\phi^k(0)$. Thus,  $\phi^{nk}(p)q0$ is a prefix with the vector equivalent to $V_p+V_q$ modulo $m$. 
    \end{proof}

For the next lemma we will need a definition of a return word (see, e.g., \cite{return}).

\begin{defn}
    Let $w$ be an infinite word, $u$ its factor, and $a_1\leq a_2 \leq \ldots$ be the sequence of all positions of occurrences of $u$ in $w$. Then factors $w[a_i,a_{i+1})$ are called {\it returns to $u$}.
\end{defn}

\begin{lm} \label{lmX0ret}
    For an infinite word $w$ generated by a morphism and beginning with $0$, the set $X_0$ is generated as an additive group by the set of vectors $$\{V_{x} \mid x\mbox{ is a return to 0 in } w\}.$$
    
\end{lm}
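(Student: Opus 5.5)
The plan is to read the statement as an equality of two subgroups of $\mathbb{Z}^\sigma$: the subgroup generated by $X_0$, and the subgroup $R$ generated by $\{V_x \mid x \text{ is a return to } 0\}$. I will prove the two inclusions directly from the definitions. Both rest on a telescoping decomposition of the prefixes preceding $0$ into returns. Neither the morphism nor the determinant condition is needed; only the fact that $w$ begins with $0$ matters.

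First I fix the positions $a_1 < a_2 < \cdots$ of all occurrences of $0$ in $w$. Since $w$ begins with $0$, we have $a_1 = 0$. For each $i$ the prefix preceding the $i$-th occurrence factors as
$$w[0,a_i) = w[a_1,a_2)\,w[a_2,a_3)\cdots w[a_{i-1},a_i),$$
a concatenation of consecutive returns to $0$. The case $i=1$ gives the empty prefix, whose Parikh vector is $0$. Because the Parikh vector is additive under concatenation,
$$V_{w[0,a_i)}=\sum_{j=1}^{i-1} V_{w[a_j,a_{j+1})}\in R.$$
Hence $X_0\subseteq R$, and so the group generated by $X_0$ is contained in $R$.

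For the reverse inclusion, take any return $x=w[a_j,a_{j+1})$. Then $w[0,a_{j+1}) = w[0,a_j)\,x$, so
$$V_x = V_{w[0,a_{j+1})}-V_{w[0,a_j)}.$$
This is a difference of two elements of $X_0$, so every generator of $R$ lies in the group generated by $X_0$, and therefore $R$ is contained in that group as well.

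There is no real obstacle here. The only point needing care is the role of the starting letter: without $a_1=0$, the prefix $w[0,a_1)$ would contribute an extra vector not expressible through returns, and the first inclusion would fail. It is also worth remarking that the argument works even if $0$ occurs only finitely many times, in which case both sets are finite. In the recurrent case this is the form in which the lemma will later be combined with Lemma~\ref{lmsubgr} to identify $X_{0,m}$ with the reduction modulo $m$ of $R$.
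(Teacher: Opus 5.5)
Your proposal is correct and follows the paper's proof, which is exactly the factorization of each prefix preceding $0$ into consecutive returns (using $a_1=0$). You also make explicit the reverse inclusion, writing $V_x$ as a difference of two elements of $X_0$. The paper leaves that step implicit, although it is needed, together with Lemma~\ref{lmsubgr}, for the ``only if'' direction of Theorem~\ref{thdet1}.
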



\begin{proof}
    Indeed, $X_0 = \{V_p : p\mbox{ is a prefix of } w \mbox{ preceding 0}\}$, and each of these prefixes can be factorized into returns to 0. Thus, $V_p = V_{x_{i_1}}+\ldots+V_{x_{i_k}}$. 
\end{proof}

Now we will show that it is enough to prove  WELLDOC only for prime moduli.

\begin{lm} \label{lmonlyp}
    For an infinite word $w$ generated by a morphism and beginning with $0$, the condition $X_{0,m} = (\mathbb{Z}/m\mathbb{Z})^\sigma$ for each $m$ is equivalent to the same condition $X_{0,m} =  (\mathbb{Z}/m\mathbb{Z})^\sigma$ only for prime numbers $m$.
\end{lm}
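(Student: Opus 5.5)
One direction is immediate: if $X_{0,m} = (\mathbb{Z}/m\mathbb{Z})^\sigma$ holds for every $m$, it holds in particular for every prime $m$. So the work is in the converse. The plan is to pass from primes to the integer group generated by return vectors, using Lemma \ref{lmX0ret}, Proposition \ref{algn=allp} and Lemma \ref{lmsubgr}.

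\textbf{Step 1.} Assume $X_{0,p} = (\mathbb{Z}/p\mathbb{Z})^\sigma$ for every prime $p$. Let $R = \{V_x \mid x \mbox{ is a return to } 0 \mbox{ in } w\}$. Every prefix preceding an occurrence of $0$ factorizes into returns to $0$ (Lemma \ref{lmX0ret}), so every element of $X_0$ is a nonnegative integer combination of vectors of $R$. Reducing modulo $p$, the set $X_{0,p}$ lies in the subgroup of $(\mathbb{Z}/p\mathbb{Z})^\sigma$ generated by $R/pR$. By assumption this subgroup is therefore all of $(\mathbb{Z}/p\mathbb{Z})^\sigma$, for every prime $p$.

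\textbf{Step 2.} Proposition \ref{algn=allp} now gives that $R$ generates $\mathbb{Z}^\sigma$ as an additive group.

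\textbf{Step 3.} Fix an arbitrary $m$; we show $X_{0,m} = (\mathbb{Z}/m\mathbb{Z})^\sigma$. Every return vector lies in $X_{0,m}$ modulo $m$. Indeed, if $x = w[a_i, a_{i+1})$ is a return to $0$, then $w[0,a_i)$ and $w[0,a_{i+1})$ both precede occurrences of $0$. Hence $V_x \equiv_m V_{w[0,a_{i+1})} - V_{w[0,a_i)}$ is a difference of two elements of $X_{0,m}$. By Lemma \ref{lmsubgr}, $X_{0,m}$ is a subgroup of $(\mathbb{Z}/m\mathbb{Z})^\sigma$, so this difference lies in $X_{0,m}$. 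Consequently $X_{0,m}$ contains the subgroup generated by $R/mR$. By Step 2 this subgroup is the reduction of $\mathbb{Z}^\sigma$ modulo $m$, i.e.\ all of $(\mathbb{Z}/m\mathbb{Z})^\sigma$. This gives the claim.

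The main point needing care is Step 3, which requires Lemma \ref{lmsubgr} and hence recurrence of $w$. That hypothesis is implicit here, since WELLDOC for $0$ needs infinitely many occurrences of $0$. I would state it explicitly, or note that if $0$ occurs only finitely often then $X_{0,p}$ is finite and bounded, so it cannot equal $(\mathbb{Z}/p\mathbb{Z})^\sigma$ for all large primes $p$, and both conditions then fail together.
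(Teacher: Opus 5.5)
Your proof is correct, but it takes a different route from the paper's.

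\textbf{The paper's route.} The paper never leaves the finite groups. It shows that if $X_{0,m}$ and $X_{0,p}$ are full then so is $X_{0,mp}$:
it takes $v_i\in X_0$ with $v_i\equiv e_i$ modulo $p$ and $u_i\in X_0$ with $u_i\equiv e_i$ modulo $m$;
it notes that $mv_i\equiv me_i$ and $pu_i\equiv pe_i$ modulo $mp$;
it then corrects $v_i$ by multiples of the $pe_j$.
Induction on the number of prime factors of $m$ finishes the argument.

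\textbf{Your route.} You lift to $\mathbb{Z}^\sigma$. The prime-level hypothesis forces the return vectors to generate $\mathbb{Z}^\sigma$, via Lemma~\ref{lmX0ret} and Proposition~\ref{algn=allp}. Reducing modulo an arbitrary $m$, together with closure of $X_{0,m}$ under differences, then gives fullness.

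\textbf{What each buys.} The paper's argument is elementary and self-contained: it avoids the structure theorem for finitely generated abelian groups that underlies Proposition~\ref{algn=allp}. Yours is shorter and needs no induction on the factorization of $m$. It also proves in one pass the whole chain of equivalences used in Theorem~\ref{thdet1} (all moduli, all primes, returns generating $\mathbb{Z}^\sigma$). With your argument, Lemma~\ref{lmonlyp} becomes a byproduct of that theorem's last step.

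\textbf{On your caveat.} Your dependence on Lemma~\ref{lmsubgr} is the same as the paper's: you use it at modulus $m$, the paper at modulus $mp$. So the caveat is not a gap relative to the paper. Two refinements are worth noting. First, infinitely many occurrences of $0$ do not by themselves give recurrence, so your fallback remark only covers the case where $0$ occurs finitely often. Second, this hardly matters, because the proof of Lemma~\ref{lmsubgr} never actually uses recurrence. It uses prolongability of $\phi$ on $0$ and the invertibility of $A_\phi$ modulo $m$, which is needed to get $A_\phi^n\equiv_m I_\sigma$. That invertibility is the hidden assumption worth stating, and it holds in the setting $\det A_\phi=\pm1$ where the lemma is applied.
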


\begin{proof}
    Clearly, if the condition holds for each integer $m$, then it holds in particular for prime numbers $m$. To prove the converse, it is enough to prove that if  $X_{0,m}=(\mathbb{Z}/m\mathbb{Z})^\sigma$ and $X_{0,p}= (\mathbb{Z}/p\mathbb{Z})^\sigma$, then $X_{0,mp}= (\mathbb{Z}/mp\mathbb{Z})^\sigma$, where  $m$ is an integer and $p$ is prime.
    
Since $X_{0,p} = (\mathbb{Z}/p\mathbb{Z})^\sigma$, there exists $v_i\in X_0$ such that $v_i \!\!\mod p$ has $1$ only in the $i$-th coordinate and $0$'s in all other coordinates, i.e., it is equal to $(0,0, \ldots, 1, \ldots, 0)$. 
If we consider it modulo $mp$, it is equal to $(pa_0, pa_1, \ldots, 1+pa_i, \ldots, pa_{\sigma-1})$ for some integers $a_i$. 
    \hyperlink{lmsubgr}{Lemma \ref{lmsubgr}} implies that $X_{0,mp}$ is a subgroup, 
    hence $mv_i=\underbrace{v_i+\ldots+v_i}_{m \text{  times}}\in X_{0,mp}$
    , whereas it is equal to $(0,0, \ldots, m, \ldots, 0)$. In the same way, considering vectors modulo $m$, we obtain all vectors of the form $(0,0, \ldots, p, \ldots, 0)$. Now we can subtract from each vector $(pa_0, pa_1, \ldots, 1+pa_i, \ldots, pa_{\sigma-1})$ all $pa_i$'s and obtain all vectors $(0,0, \ldots, 1, \ldots, 0)$ modulo $mp$. 
\end{proof}


Now we can give a sufficient condition for the WELLDOC property for recurrent words generated by morphisms (this is in fact part  of the statement of Theorem~\ref{thdet1all}). 
 

\hypertarget{thdet1}{}

\begin{thm} \label{thdet1}
    Let $w$ be a recurrent infinite word generated by a morphism $\phi$ such that $\det A_\phi = \pm 1$. Then $w$ satisfies the WELLDOC property if and only if Parikh vectors of returns to the first letter $w_0$ of the word $w$ generate $\mathbb{Z}^\sigma$ as an additive group. 
\end{thm}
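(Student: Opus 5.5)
By Lemma~\ref{lmonly0}, since $\det A_\phi=\pm1$, the WELLDOC property for $w$ is equivalent to WELLDOC for $0$, i.e.\ to $X_{0,m}=(\mathbb{Z}/m\mathbb{Z})^\sigma$ for all $m$. So it suffices to show that this condition holds for all $m$ if and only if the Parikh vectors of returns to $0$ generate $\mathbb{Z}^\sigma$. Let $R=\{V_x \mid x \text{ is a return to } 0\}$. Lemma~\ref{lmX0ret} gives $X_0 = \langle R\rangle_{\text{monoid}}$ on the level of sums. More precisely, every element of $X_0$ is a nonnegative integer combination of elements of $R$. Conversely, every return to $0$ occurs somewhere, so its vector is a difference of two elements of $X_0$.

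\emph{Sufficiency.} Assume $R$ generates $\mathbb{Z}^\sigma$ as a group. Fix $m$. By Lemma~\ref{lmsubgr}, $X_{0,m}$ is a subgroup of $(\mathbb{Z}/m\mathbb{Z})^\sigma$. Every return $x$ occurs at some position $a_i$ with $w[a_i,a_{i+1})=x$. The prefixes $w[0,a_i)$ and $w[0,a_{i+1})$ both precede $0$, so $V_x\equiv_m V_{w[0,a_{i+1})}-V_{w[0,a_i)}\in X_{0,m}$, using the subgroup property. Hence $X_{0,m}$ contains $R \bmod m$, and therefore the subgroup it generates. Any $v\in\mathbb{Z}^\sigma$ is an integer combination of elements of $R$, and reducing modulo $m$ gives $X_{0,m}=(\mathbb{Z}/m\mathbb{Z})^\sigma$. (Alternatively, check only primes via Lemma~\ref{lmonlyp}, but this is not needed.)

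\emph{Necessity.} Assume $X_{0,m}=(\mathbb{Z}/m\mathbb{Z})^\sigma$ for every prime $m=p$. By Lemma~\ref{lmX0ret}, each element of $X_0$ is a sum of vectors in $R$. So every vector of $(\mathbb{Z}/p\mathbb{Z})^\sigma$ is a sum of elements of $R/pR$, meaning $R/pR$ generates $(\mathbb{Z}/p\mathbb{Z})^\sigma$. This holds for every prime $p$, so Proposition~\ref{algn=allp} yields that $R$ generates $\mathbb{Z}^\sigma$.

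The only delicate point is the sufficiency step placing return vectors into $X_{0,m}$. Differences of prefix vectors are handled by the group structure from Lemma~\ref{lmsubgr}, which uses recurrence and $\phi(0)$ beginning with $0$. Everything else is a direct combination of the earlier lemmas.
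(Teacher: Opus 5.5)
Your proof is correct and is essentially the paper's argument: Lemma~\ref{lmonly0}, then Lemma~\ref{lmX0ret} and Proposition~\ref{algn=allp} for necessity. For sufficiency you differ in two small ways. You work modulo an arbitrary $m$ instead of passing to primes via Lemma~\ref{lmonlyp}. You also show $R \bmod m \subseteq X_{0,m}$ explicitly, by writing each return vector as a difference of consecutive prefix vectors inside the subgroup of Lemma~\ref{lmsubgr}; the paper leaves this step implicit when it calls $X_0$ ``a group generated by returns''.

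Your opening phrase $X_0=\langle R\rangle_{\text{monoid}}$ overstates things. Only the inclusion $X_0\subseteq\langle R\rangle_{\text{monoid}}$ holds; for example, $(2,0)\notin X_0$ for the Fibonacci word. Since you only ever use that inclusion, the argument is unaffected.
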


\begin{remark}
The condition on Parikh vectors of returns to 0 can be checked algorithmically (see Lemma \ref{lmret}  and Proposition \ref{pr_alg}.)
\end{remark}

\begin{proof}
    Without loss of generality we assume that $w_0 = 0$. We note that only recurrent words satisfy the WELLDOC property by definition, and for recurrent words we can apply the lemmas above. Lemma \ref{lmonly0} implies that the WELLDOC property is equivalent to the fact that $X_{0,p} =  (\mathbb{Z}/p\mathbb{Z})^\sigma$ for each $p$, and Lemma \ref{lmonlyp} implies that it is enough to check the WELLDOC property only for prime numbers $p$. Lemma \ref{lmX0ret} implies that the set $X_0$ considered as a group with the standard operation of vector addition is generated  
    by Parikh vectors of returns to $0$. It remains to prove that Parikh vectors generate
   the set $\mathbb{Z}^\sigma$ if and only if they generate  $(\mathbb{Z}/p\mathbb{Z})^\sigma$ for each prime $p$, which follows from \hyperlink{algn=allp}{Proposition \ref{algn=allp}}.
\end{proof}

\subsection{Example: episturmian words} \label{subsec:suf_ex}

As an example of the application of  \hyperlink{thdet1}{Theorem \ref{thdet1}}  we will reprove in part 
a result from \cite{Apng} for a subclass of standard episturmian words generated by morphisms. A factor  $u$ of an infinite word $w$ is called \textit{left special} if it can be continued to the left with at least two distinct letters, i.e., there exist letters $a, b \in \Sigma$, $a \ne b$ such that $au$ and $bu$ are also factors of $w$.

\begin{defn}
    An infinite aperiodic word $w$ is called {\it episturmian} if the set of factors of $w$ satisfies the following properties:
    
    \begin{enumerate}
        \item For each factor $u$ the word $u^R$ is also a factor of $w$, where $u^R$ denotes the word $u$ written in the reversed order.
        \item For each $n \in \mathbb{N}$ there exists exactly one left special factor $u$ of length $n$.
    \end{enumerate}

    If in addition all left special factors are prefixes of $w$, then $w$ is called \textit{standard episturmian}.
\end{defn}

\begin{defn}
    A morphism $\phi$ is called {\it episturmian} if  $\phi(w)$ is episturmian for each episturmian word $w$. 
\end{defn}

In \cite{Apng}, the WELLDOC property has been proved for episturmian words. We will now reprove this fact for standard episturmian words generated by morphisms using \hyperlink{thdet}{Theorem \ref{thdet1}}. In fact, it follows that the WELLDOC property holds for episturmian words with the same set of factors as standard episturmian words generated by morphisms, since the WELLDOC property depends only on the set of factors.

\begin{st}
    Standard episturmian words generated by morphisms satisfy the WELLDOC property.
\end{st}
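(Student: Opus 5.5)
Our plan is to apply Theorem \ref{thdet1}. We need two facts about a standard episturmian word $w$ generated by a morphism $\phi$: first, that $\det A_\phi=\pm 1$; second, that the Parikh vectors of returns to the first letter $w_0$ generate $\mathbb{Z}^\sigma$. Recurrence is automatic, since episturmian words are uniformly recurrent.

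\textbf{Determinant.} We use the standard structure theory of standard episturmian words (Justin, Pirillo; Droubay, Justin, Pirillo). The word $w$ has a directive sequence $\Delta=x_1x_2\cdots$ over $\Sigma$, and
$$w=\lim_n L_{x_1}\cdots L_{x_n}(a)$$
for any letter $a$, where $L_x$ is the elementary episturmian morphism $x\mapsto x$, $y\mapsto xy$ for $y\ne x$. If $w$ is a fixed point of a morphism, then $\Delta$ is (ultimately) periodic. Moreover, the morphism generating $w$ can be taken to be a composition $\psi=L_{x_1}\cdots L_{x_k}$ that fixes $w$. Each $L_x$ is an elementary matrix: the identity plus ones in row $x$ off the diagonal. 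Hence $\det A_{L_x}=1$, and so $\det A_\psi=1$.

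A subtle point is that the statement concerns an arbitrary morphism $\phi$ generating $w$, not necessarily $\psi$. We handle this in either of two ways. One option is to read the statement as saying that the word $w$ satisfies WELLDOC, in which case it suffices to exhibit \emph{some} generating morphism with determinant $\pm1$, namely $\psi$. The other option is to invoke the known fact that every morphism fixing a standard episturmian word is episturmian, hence a composition of the $L_x$ and permutations, which all have determinant $\pm1$. We plan to use the first reading and note the second.

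\textbf{Returns to the first letter.} Let $a=x_1=w_0$. Every letter of $\Sigma$ occurs in $w$, because a letter absent from the directive sequence does not occur. The returns to $a$ are exactly the words $a$ and $ab$ for letters $b\ne a$. Indeed, $w=L_a(w')$ with $w'$ standard episturmian, so $w$ is a concatenation of blocks $a$ and $ab$, and every letter $b\ne a$ occurring in $w$ appears immediately after an $a$. If $\sigma>1$ and $w$ is aperiodic, then $aa$ or some $ab$ yields the return $a$. More carefully, the return words of $a$ are the images $L_a(c)$ for letters $c$ occurring in $w'$, namely $a$ (for $c=a$) and $ab$ (for $c=b$). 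Since every letter occurs in $w'$, all of these occur. Their Parikh vectors are $e_a$ and $e_a+e_b$, which clearly generate $\mathbb{Z}^\sigma$.

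\textbf{Main obstacle.} The delicate step is justifying that the directive sequence contains every letter and that all returns $a$ and $ab$ actually occur. In particular, we must check that the letter $a$ itself occurs in $w'$, i.e.\ that $a$ appears again in $x_2x_3\cdots$, which follows from periodicity of $\Delta$. We also need the citation that morphic standard episturmian words have periodic directive sequences. We would rely on Justin and Pirillo for these structural facts, and then conclude by Theorem \ref{thdet1}.
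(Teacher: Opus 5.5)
Your route is the paper's route: get $\det=\pm1$ from a factorization into elementary episturmian morphisms, identify the returns to $w_0$ as $w_0$ and $w_0b$, and conclude by Theorem~\ref{thdet1}. Your remark that one generating morphism with determinant $\pm1$ suffices is correct. The problem is the structural step you flagged yourself.

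Producing $\psi=L_{x_1}\cdots L_{x_k}$ with $\psi(w)=w$ needs $\Delta$ to be \emph{purely} periodic: $\psi(w)=w$ forces $\Delta=(x_1\cdots x_k)^\omega$ by uniqueness of the directive sequence. Concluding that $a=x_1$ occurs again in $x_2x_3\cdots$ also needs pure periodicity. Being a fixed point of a morphism does not give this. Your fallback, that every morphism fixing a standard episturmian word is episturmian, is false.

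Here is a counterexample. Take $\phi\colon 0\mapsto 01,\ 1\mapsto 0201,\ 2\mapsto 02$. Its fixed point is $w=L_0(f)=0102010102010201\cdots$, where $f=12112121\cdots$ is the Fibonacci word on $\{1,2\}$. So $w$ is an aperiodic standard episturmian word with $\Delta=0(12)^\omega$, in which $0$ never reappears. Since $\Delta$ is not purely periodic, no composition of the $L_x$ fixes $w$. Moreover $\det A_\phi=0$, because the column of $1$ is the sum of the other two. As $00$ is not a factor, the only returns to $0$ are $01$ and $02$. Hence every prefix $p$ preceding $0$ satisfies $|p|_0=|p|_1+|p|_2$, so $(1,0,0)\notin X_{0,2}$ and WELLDOC fails.

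So the statement cannot be proved as stated without an extra hypothesis. If you assume $\Delta$ is purely periodic (equivalently, $w$ is fixed by some $L_{x_1}\cdots L_{x_k}$), your argument goes through verbatim. In that case $\psi=L_{x_1}\cdots L_{x_k}$ is prolongable on $x_1$ with $\det A_\psi=1$, and every letter, including $x_1$, occurs in $x_2x_3\cdots$. The returns are then $x_1$ and $x_1b$ for all $b\neq x_1$, and their Parikh vectors generate $\mathbb{Z}^\sigma$.

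The paper's own proof has the same hole. It tacitly assumes that the morphism generating $w$ is episturmian. It also quotes the return set $\{\psi_0(a)\mid a\in\Sigma\}$ from Justin and Vuillon. That set is valid only when every letter, in particular $w_0$, occurs again in the directive sequence after the first position, as it does for Arnoux--Rauzy words. So your proposal matches the paper's argument, including its unjustified step.
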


\begin{proof}
    Episturmian morphisms admit an explicit description similar to Sturmian morphisms
    (see \cite{episturm}). A morphism is episturmian if and only if it is a composition of the following morphisms (here $a, b$ and $x$ are letters): 
    
    \begin{equation*}
        \theta_{ab}: \begin{matrix}
        a \to b \\ b \to a \\ x \to x \quad \forall x \in \Sigma\backslash\{a,b\}
        \end{matrix} \quad
        \psi_a: \begin{matrix}
        a \to a \\ x \to ax \quad \forall x \in \Sigma\backslash\{a\}
        \end{matrix} \quad
        \widetilde{\psi}_a: \begin{matrix}
        a \to a \\ x \to xa \quad \forall x \in \Sigma\backslash\{a\}
        \end{matrix}
    \end{equation*}
    
    In other words, $\theta_{ab}$ trades places two letters; the morphisms $\psi_a$ and $\widetilde{\psi}_a$ add the letter $a$ to all other letters. Similarly, for Sturmian morphisms, determinants of matrices of these morphisms are equal to $\pm 1$. Therefore, the matrices of episturmian morphisms also have determinants $\pm 1$, and we can apply  \hyperlink{thdet1}{Theorem \ref{thdet1}}.

    For a given standard episturmian word $w$ consider its first letter $w_0$; without loss of generality we may assume that $w_0=0$. By Theorem 4.4 in \cite{epireturn}, the set of return words to the prefix $0$ is the set $\{\psi_0(a)\mid a \in \Sigma\}$. Therefore, we need to check whether $\{V_0\} \cup \{V_{0a} \mid a \in \Sigma \setminus \{0\}\}$ generates $\mathbb{Z}^\sigma$ as an additive group. Clearly, this is the case, since  $\{V_0\} \cup \{V_{0a} -V_0 \mid a \in \Sigma \setminus \{0\}\}$ is the standard basis of $\mathbb{Z}^\sigma$.
\end{proof}

Now we give an example of a word generated by morphism such that the determinant of the matrix of the morphism is $\pm 1$; however, the WELLDOC property does not hold, which shows that the condition on returns from Theorem \hyperlink{thdet1}{\ref{thdet1}} is indeed necessary in the non-binary case.

\begin{ex}
    Consider the following morphism $\phi$: 
    
    \begin{equation*}
        \phi: \begin{matrix}
        0 \to 02,\mbox{ } \\ 1 \to 101, \\ 2 \to 102;
        \end{matrix} \qquad 
        A_\phi = \begin{pmatrix}
            1 & 1 & 1 \\
            0 & 2 & 1 \\
            1 & 0 & 1 \\
        \end{pmatrix}.
    \end{equation*}
    We have $\det A_\phi = 1$, but the returns to   $0$ are $01$ and $021$, so $X_0 = \left<(1,1,0),(1,1,1)\right>$.
\end{ex}

\subsection{Decidability of the sufficient condition} \label{subsec:suf_dec}

We first provide a proof in the case of primitive morphisms, and then proceed with the general case.

We will need one more definition. 

\begin{defn}
    Let $w$ be an infinite word. We define the \textit{graph of $w$} as a directed graph whose set of vertices is identified with the set of letters (i.e., it has $\sigma$ vertices), and there is an edge from letter $a$ to letter $b$ if and only if $ab$ is a factor of $w$.
\end{defn}

\begin{n}
   This graph is also a Rauzy graph of order 1.
\end{n}

\begin{ex}
    The word $(0101201)^\infty$ has the following graph:
    
    \begin{tikzpicture}
        \coordinate [label=left:0] (0) at (0,0);
        \coordinate [label=above:1] (1) at (1,1.7);
        \coordinate [label=right:2] (2) at (2,0);
        \filldraw[black] (1, 1.7) circle (2pt);
        \filldraw[black] (0, 0) circle (2pt) ;
        \filldraw[black] (2,0) circle (2pt) ;
        
        \draw[black, ->] (0) arc (180:123:2);
        \draw[black, ->] (1) arc (60:3:2);
        \draw[black, ->] (2) arc (-60:-117:2);
        \draw[black, ->] (1) arc (0:-57:2);
        
    \end{tikzpicture}
    
\end{ex}

 
The following lemma provides an algorithm to compute returns to letters in primitive morphisms. This implies that the sufficient condition in  \hyperlink{thdet1}{Theorem \ref{thdet1}} can be checked algorithmically.
 
\hypertarget{lmret}{}

\begin{lm} \label{lmret}
    Let $w$ be an infinite word generated by a primitive morphism $\phi$, and suppose that for each letter $a$ its image $\phi(a)$ contains all letters from the alphabet $\Sigma$. For each letter $a$ we decompose its image in the form  $\phi(a) = x_a0v_{a,0}0v_{a,1}0\cdots 0v_{a,k_a-1}0y_a$, where $v_{a,i}, x_a, y_a \in (\Sigma\setminus 0)^*$.   Then each return to $0$ in $w$ is either of the form $0v_{a,i}$, or of the form $0y_ax_b$ for some letters $a,b$, if the graph of $w$ has the edge $ab$.
\end{lm}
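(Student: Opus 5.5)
The approach is to use the identity $w=\phi(w)$ and look at how a return to $0$ sits inside the concatenation of images of consecutive letters. Write $w=\phi(w)=\phi(w_0)\phi(w_1)\phi(w_2)\cdots$. By hypothesis every image $\phi(a)$ contains each letter, in particular $0$, so each block $\phi(w_j)$ has the form $x_{w_j}0v_{w_j,0}0\cdots 0v_{w_j,k_{w_j}-1}0y_{w_j}$ with $x,v,y$ free of $0$. A return to $0$ is a factor $w[a_i,a_{i+1})$ between two consecutive occurrences of $0$, i.e. $0$ followed by a maximal $0$-free word. So the first step is to fix an occurrence of $0$ in $w$ and locate it inside some block $\phi(w_j)$; it is one of the displayed zeros of that block.

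Next I split into two cases according to which zero of $\phi(w_j)$ it is. If it is not the last zero of $\phi(w_j)$, say the one preceding $v_{w_j,i}$, then the next $0$ in $w$ is the next displayed zero of the same block, and the return is exactly $0v_{w_j,i}$. If it is the last zero of $\phi(w_j)$, then after it comes $y_{w_j}$, and then the block $\phi(w_{j+1})$, which begins with $x_{w_{j+1}}$ followed by a $0$ (it contains $0$ by assumption). Since $y_{w_j}$ and $x_{w_{j+1}}$ contain no $0$, the next occurrence of $0$ is the first zero of $\phi(w_{j+1})$, and the return is $0y_{w_j}x_{w_{j+1}}$. Setting $a=w_j$, $b=w_{j+1}$, the word $ab$ is a factor of $w$, i.e. $ab$ is an edge of the graph of $w$. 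This gives exactly the two forms in the statement.

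There is no serious obstacle; the only point requiring care is that every block contains a $0$, so a return never spans more than two consecutive blocks, and that is precisely the assumption that each image contains all letters. (Conversely, all these words actually occur as returns, since every letter and every edge $ab$ occurs in $w$; this is not needed for the stated inclusion but makes the list computable and exact.)
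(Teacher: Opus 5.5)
Your proof is correct and follows the same route as the paper. You factor $w=\phi(w)$ into blocks, use that every block contains a $0$ so a return spans at most two consecutive blocks, and read off $0v_{a,i}$ (inside one block) or $0y_ax_b$ (across $\phi(a)\phi(b)$ with $ab$ a factor of $w$); your case split on whether the occurrence of $0$ is the last zero of its block just spells out the paper's one-line argument.
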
 

\begin{proof}
    Consider a factorization of $w$ into blocks $\phi(a)$ for each letter $a$. Note that each return of $w$ overlaps at most two blocks, since each block contains at least one occurrence of 0. Each return either lies inside one block, or overlaps two blocks. The first case corresponds to $0v_{a,i}$, and the second case of $0y_ax_b$ occurs if $ab$ is a factor of $w$, which corresponds to an edge $ab$ in the graph of $w$.
\end{proof}


We now proceed with the general case of morphisms which are not necessarily primitive. The following proposition basically states that there exists an algorithm for checking the condition on Parikh vectors of return words to $w_0$ from Theorem \ref{thdet1} (that they generate $\mathbb{Z}^\sigma$ as an additive group)  for words generated by morphisms:

\begin{proposition}\label{pr_alg}
The condition on Parikh vectors of return words to $w_0$ for words generated by morphisms from Theorem \ref{thdet1} is decidable.\end{proposition}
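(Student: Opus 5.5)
Since $X_0$ is the subgroup generated by the Parikh vectors of returns to $0$ (Lemma \ref{lmX0ret}), I will not list all returns. Instead I will compute a finite set of returns whose Parikh vectors generate the same subgroup $G\le\mathbb{Z}^\sigma$. Then I test whether $G=\mathbb{Z}^\sigma$ by computing the Smith normal form (or Hermite normal form) of the integer matrix of these vectors: $G=\mathbb{Z}^\sigma$ iff the gcd of the $\sigma\times\sigma$ minors is $1$. The main work is to show that a finite, effectively computable prefix of $w$ already contains returns generating all of $G$.

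First step: the set of factors of length $2$ of $w$, i.e.\ the graph of $w$, is computable for an arbitrary morphism. Compute iterates $\phi^j(0)$ and collect letters and $2$-factors. The set of letters occurring in $w$ stabilizes after at most $\sigma$ steps. For $2$-factors, every factor $ab$ of $w$ lies either inside some $\phi(c)$ for a letter $c$ of $w$, or across a boundary $\phi(c)\phi(d)$ with $cd$ a $2$-factor of $w$. So the set of $2$-factors is the least fixed point of a monotone finite map and is reached after finitely many iterations.

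Second step: generalize Lemma \ref{lmret}. Cut $w=\phi^N(w)$ into blocks $\phi^N(c)$, where $N$ is chosen so that every letter $c$ occurring in $w$ with $\phi^N(c)$ containing $0$ does so as early as possible. Letters of $w$ whose $\phi$-orbit never produces $0$ form a set $B$, and they generate a submorphism. A return to $0$ then has the form $0\,y_c\,\phi^N(b_1\cdots b_r)\,x_d$ with $b_i\in B$. Here $y_c$ is the suffix of $\phi^N(c)$ after its last $0$, $x_d$ is the prefix of $\phi^N(d)$ before its first $0$, and $c b_1\cdots b_r d$ is a factor of $w$. This is the main obstacle, since $r$ is unbounded in the non-primitive case. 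I first make one reduction: recurrence of $w$ (which we may assume, checkable by the remark above) forces every letter of $w$ to generate $0$ under some power of $\phi$ (it occurs in $w$, hence in some $\phi^k(0)$, and $0$ recurs). So $B=\emptyset$. With this reduction the returns are exactly $0v_{c,i}$ inside blocks $\phi^N(c)$ and $0y_cx_d$ for edges $cd$ in the graph of $w$, where $N$ is chosen so that each $\phi^N(c)$ contains $0$. This gives a finite computable list.

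Final step: form the integer matrix of the Parikh vectors in this list and check whether its Smith normal form has all invariant factors equal to $1$. By Lemma \ref{lmX0ret} and Theorem \ref{thdet1}, this decides the condition. If recurrence fails, the answer is negative by the earlier remark.
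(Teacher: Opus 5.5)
\textbf{Gap: recurrence does not give $B=\emptyset$.} Your key reduction is false: recurrence of $w$ does not force every letter of $w$ to produce $0$ under iterates of $\phi$. A recurrent word need not be uniformly recurrent, so it may contain arbitrarily long $0$-free factors. Your justification (the letter occurs in some $\phi^k(0)$ and $0$ recurs) says nothing about the words $\phi^j(c)$.

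Concretely, take $\phi(0)=010$, $\phi(1)=11$. Then $\phi^{n+1}(0)=\phi^n(0)\,1^{2^n}\,\phi^n(0)$, so every prefix $\phi^k(0)$ reoccurs at the positions $|\phi^n(0)|+2^n$ for $n\ge k$, and $w$ is recurrent. But $\phi^j(1)=1^{2^j}$ never contains $0$, and $01^{2^n}$ is a return to $0$ for every $n$.

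The same happens within the scope of Theorem \ref{thdet1}. Take $\phi(0)=021$, $\phi(1)=0$, $\phi(2)=23$, $\phi(3)=2$. The matrix is block triangular with two diagonal blocks $\left(\begin{smallmatrix}1&1\\1&0\end{smallmatrix}\right)$, so $\det A_\phi=1$. Moreover $\phi^{n+1}(0)=\phi^n(0)\,\phi^n(2)\,\phi^{n-1}(0)$, so $w$ is recurrent by the same argument. Yet $\{2,3\}$ is $\phi$-invariant, and the return to $0$ straddling $\phi^n(2)$ has unbounded length.

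\textbf{Consequence, and how the paper avoids it.} In general there is no $N$ for which every block $\phi^N(c)$ contains $0$. Returns genuinely have the form $0\,y_c\,\phi^N(b_1\cdots b_r)\,x_d$ with $r$ unbounded. So your finite list is neither well defined nor shown to generate $X_0$, and the obstacle you flagged remains unresolved. Your argument, including the Smith normal form test, is correct when every letter of $w$ eventually produces $0$ (e.g.\ for primitive $\phi$); there it is essentially Lemma \ref{lmret} applied to $\phi^N$.

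The paper avoids the problem by working modulo a fixed $m$. The sets $S^i_{a,b}$ of Parikh vectors modulo $m$ of factors of $\phi^i(0)$ from $a$ to $b$ are finite and determined by those at level $i-1$. Hence the iteration stabilizes however long the $0$-free stretches are. One prime then yields a nonzero integer determinant $k$, and only the primes dividing $k$ remain to be checked.

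To stay over $\mathbb{Z}$ you would need an analogous stabilization argument. One option is the subgroups of $\mathbb{Z}^{\sigma+1}$ generated by the vectors $(V_{aub},1)$ for factors $aub$ of $\phi^i(0)$. These evolve by fixed linear maps and form ascending chains, which must stabilize.

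\textbf{Minor point.} For non-recurrent $w$ the stated condition concerns returns, not WELLDOC. So answering ``negative'' there does not follow from the remark you cite, although only recurrent words matter for Theorem \ref{thdet1}.
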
 

\begin{proof} We provided an explicit algorithm which we divide into two parts. 

The first part of the algorithm decides, given an integer $m$, whether the Parikh vectors of the returns to zero modulo $m$  generate $(\mathbb{Z}/m\mathbb{Z})^\sigma$ as a group. It works as follows.
For each pair of letters $a,b \in \Sigma$ consider the set of vectors $S^i_{a, b}$, consisting of Parikh vectors modulo $m$ of all factors of $\phi^i(0)$ beginning with $a$ and ending with $b$. Suppose that some vector $x$ is new to $S^k_{a, b}$, i.e. $x \in S^k_{a, b}$ and $x \notin S^i_{a, b}$ for $i < k$. Since $x$ starts with $a$ and ends with $b$, we have $x=aub$ for some word $u$. Consider the first occurrence of $x$ in $\phi^k(0)$. Consider $\phi^k(0)$ as a concatenation of blocks $\phi(y)$ for $y\in \Sigma$, i.e. $\phi^k(0)=\phi(\phi^{k-1}(0))$. In such factorization, this occurrence of $x$ is a factor of a morphic image of some factor $v=cwd$ for some letters $c,d \in \Sigma$, i.e., $\phi(c)=rar'$, $\phi(d)=s'as$, $\phi(v) =raubs = \phi(c)\phi(w)\phi(d)$.  

Note that $V_w \in S^{k-1}_{c,d}$, since $\phi(w)$ is a factor of $\phi^k(0)$. 
It follows that $w$ is a new vector in $S^{k-1}_{c, d}$, since otherwise $aub$ would not be new for  $S^{k}_{a, b}$. Therefore, if for some $k$ we have $S^k_{a, b} = S^{k+1}_{a, b}$ for all pairs $a,b \in \Sigma$, then we also have  $S^{l}_{a, b} = S^{k}_{a, b}$ for each $ l > k$, for all pairs $a,b \in \Sigma$. Since each of these sets has a finite set of vectors, this will happen for some $k < \sigma^2 m^{\sigma}$. In particular, we will find a maximal $S^k_{0,0}$, which corresponds to the set of all returns to 0. 

 Now we describe the second part of the algorithm that checks, using the first part of the algorithm, whether the set of Parikh vectors of returns to $0$ generates $\mathbb{Z}^{\sigma}$ as a group. Consider an arbitrary prime $m$. If, using the algorithm, we determine that modulo this $m$ the set $S_{0,0}$ does not generate $(\mathbb{Z}/m\mathbb{Z})^\sigma$ as an additive group, then by \hyperlink{algn=allp}{Proposition \ref{algn=allp}} the entire set of returns also does not generate $\mathbb{Z}^\sigma$. Otherwise, $S_{0,0}$ has a basis
, and the matrix formed by these vectors has determinant non-zero in 
$\mathbb{Z}/m\mathbb{Z}$ by \hyperlink{algbas=det1}{Proposition \ref{algbas=det1}}. However, this implies that the determinant of the matrix composed of the Parikh vectors of these returns (now without reducing modulo 
$m$) is also non-zero. Let this determinant be equal to 
$k$. Then, for any prime $p$ coprime with $k$, this set of vectors generates the whole $(\mathbb{Z}/m\mathbb{Z})^\sigma$, whereas for the remaining primes dividing 
$k$ one can check this using the same algorithm. Again, by \hyperlink{algn=allp}{Proposition \ref{algn=allp}}, if for any of them the algorithm fails, then the set of returns does not generate $\mathbb{Z}^\sigma$ as an additive group.  Otherwise, it does.
\end{proof}

We will now show that in the binary case for recurrent words the condition of  Theorem \ref{thdet1} on returns always holds for morphisms with $\det A=\pm1$, 
which proves Theorem \ref{thdet1bin} in one direction. 

\hypertarget{thdet1>wbin}{}
\begin{thm} \label{thdet1>wbin}
    Let $w$ be an infinite recurrent binary word generated by a morphism $\phi$. If $\det A_\phi = \pm 1$, then $w$ satisfies the WELLDOC property.
\end{thm}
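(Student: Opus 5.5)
By Theorem \ref{thdet1} it suffices to show that the Parikh vectors of returns to $0$ generate $\mathbb{Z}^2$ as an additive group. We may assume $w_0=0$. Since $w$ is binary (both letters occur) and recurrent, both letters occur infinitely often. Every return to $0$ has the form $01^k$ with $k\ge 0$, so its Parikh vector is $(1,k)$. Two cases arise.

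\textbf{Case 1: two returns with consecutive exponents.} If there are returns $01^k$ and $01^{k+1}$, their vectors $(1,k)$ and $(1,k+1)$ form a matrix of determinant $1$. By Proposition \ref{algbas=det1} they generate $\mathbb{Z}^2$ and we are done. More generally, it is enough that the set $K$ of exponents $k$ with $01^k$ a return has $\gcd$ of pairwise differences equal to $1$. Indeed, the group generated by the vectors $(1,k)$, $k\in K$, contains $(0,k-k')$ for all $k,k'\in K$, hence $(0,1)$, and then $(1,0)$.

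\textbf{Case 2: the exponents in $K$ are all congruent modulo some $d\ge2$.} Here I would use $\det A_\phi=\pm1$ to derive a contradiction. The plan is to argue modulo a prime $p\mid d$. Since $X_0$ is generated by the returns (Lemma \ref{lmX0ret}), $X_{0,p}$ lies in the proper subgroup $H=\{(x,y): y\equiv k_0x \pmod p\}$ for a fixed residue $k_0$. Each such subgroup is a line, i.e. a $1$-dimensional subspace of $(\mathbb{Z}/p\mathbb{Z})^2$.

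Next I would show that $X_0$ is invariant under $A_\phi$. If $q0$ is a prefix, then $\phi(q)\phi(0)$ is a prefix and $\phi(0)$ starts with $0$, so $A_\phi V_q\in X_0$. Thus $A_\phi H\subseteq H$ modulo $p$, and since $A_\phi$ is invertible mod $p$, in fact $A_\phi H=H$.

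Separately, I would show that $X_0$ contains vectors with first coordinate $0$ modulo $p$ and nonzero second coordinate, giving a contradiction with $H$. One route is via Lemma \ref{lmsubgr} and the returns to $0$: the prefix $\phi^n(0)$ for $n$ with $A_\phi^n\equiv I$ is followed (by recurrence, at some later iterate) by $0$. Alternatively, I would directly exhibit two prefixes preceding $0$ whose exponent difference is $1$ modulo $p$. For example, compare an occurrence of $0$ inside the block $\phi(0)=0s$ with one inside $\phi(1)$, and use the fact that the difference of the columns of $A_\phi$ is not degenerate modulo $p$ since $\det A_\phi\not\equiv0$.

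The main obstacle is this last step: converting $\det A_\phi=\pm1$ into a second return whose vector leaves the line $H$. The approach I would try first is to take returns inside long blocks. If $0$ is followed in $w$ by $a$, then $\phi^n(0a)$ occurs in $w$ for all $n$, and the return structure of $\phi^n(0)\phi^n(a)$ yields prefixes preceding $0$ with Parikh vectors $A_\phi^nV_0$ and $A_\phi^n(V_0+V_a)$ (when $\phi^n(a)$ begins with $0$), or shifted versions of these. Their differences span $A_\phi^n$ applied to the column vectors. By invertibility modulo $p$, these cannot all lie in the single line $H$ unless both columns do, and that would force $\det A_\phi\equiv0\pmod p$, which is a contradiction.
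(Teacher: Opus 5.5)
Your reduction via Theorem~\ref{thdet1}, Case~1, the inclusion $X_{0,p}\subseteq H$, and the invariance $A_\phi H=H$ are fine. The last of these uses $X_{0,p}=H$, which holds because $X_{0,p}$ is a nonzero subgroup of the order-$p$ group $H$.

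The gap is the decisive step of Case~2, which you never carry out. The mechanism you sketch only works when $\phi(1)$ begins with $0$: then every cutting point is followed by $0$, so $A_\phi^ne_0$ and $A_\phi^ne_1$ really do lie in $X_{0,p}$. If $\phi(1)$ begins with $1$, the prefix $\phi^n(P0)$ is not followed by $0$. The ``shifted'' prefix $\phi^n(P0)1^{j}$ differs from $\phi^n(P)$ by $A_\phi^ne_0+je_1$. That is the Parikh vector of the concatenation of returns $\phi^n(0)1^j$, so it lies in $H$ automatically and carries no information.

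Worse, the implication you aim for is false: it is not true that all returns lying on one line mod $p$ forces $\det A_\phi\equiv 0\pmod p$. The morphism $\phi(0)=010$, $\phi(1)=1$ generates the recurrent word $(01)^\infty$. Its only return to $0$ is $01$, so $X_{0,3}=\langle(1,1)\rangle$. Yet $A_\phi=\left(\begin{smallmatrix}2&0\\1&1\end{smallmatrix}\right)$ is invertible mod $3$ and preserves this line. Every ingredient your argument uses is present, and there is no contradiction mod $3$. This is consistent with the theorem only because $\det A_\phi=2$, an obstruction invisible modulo $3$. So Case~2 cannot be closed using only the chosen prime and invertibility modulo it.

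What is missing is combinatorial input. If $11$ is not a factor, there are two possibilities:
\begin{itemize}
\item $00$ is a factor, giving returns $0$ and $01$; this is your Case~1.
\item $w=(01)^\infty$. This must be excluded over $\mathbb{Z}$; the paper does so by listing the possible $\phi$ and computing $\det A_\phi$.
\end{itemize}
If $11$ is a factor, the paper writes $\phi(0)$ and $\phi(1)$ as runs of $1$'s separated by $0$'s. It then compares the junctions inside $\phi(0)\phi(1)$, $\phi(1)\phi(0)$ and $\phi(1)\phi(1)$. This forces the leading run of $\phi(1)$ to have length $\equiv 0$ and the trailing runs of $\phi(0)$ and $\phi(1)$ to have length $\equiv k_0\pmod p$. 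Hence both columns of $A_\phi$ lie in $H$ and $p\mid\det A_\phi$.

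Your invariance idea can be finished in the same spirit. Write $L(v)=v_1-k_0v_0$; invariance gives $L\circ A_\phi=\lambda L$ with $\lambda\not\equiv 0$. A position $i$ with $w_i=1$ at distance $s$ from the next $0$ has $L(V_{w[0,i)})\equiv -s$. The cutting point $|\phi(w[0,i))|$ lies at distance $z_1$ from the next $0$, where $z_1$ is the number of leading $1$'s of $\phi(1)$. Hence $z_1\equiv \lambda s$. A factor $01^t0$ with $t\ge 2$ supplies both $s=1$ and $s=2$, forcing $\lambda\equiv 0$. If $\phi(1)$ contains no $0$, then $\det A_\phi=\pm 1$ forces $\phi(0)=01^r$ and $w=01^\infty$, which is not recurrent.

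None of these steps appears in your proposal.
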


\begin{proof}
By Theorem \ref{thdet1}, it remains to prove that for each $p$ there are two distinct returns to $0$ such that their Parikh vectors generate  $(\mathbb{Z}/p\mathbb{Z})^\sigma$. 
    A return to $0$ is a binary word of the form $01\cdots 1$, hence we can take any two vectors distinct modulo $p$. We 
    claim that there are at least two distinct returns of this form.  Assume the converse, then such vector is unique modulo $p$; suppose it contains $x$ occurrences of $1$. If $00$ is a factor of $w$, then we have at least two returns to $0$ in $w$: $0$ and some other return beginning with $01$ (since by convention a binary word contains both $0$ and $1$). If $00$ is not a factor of $w$, then we claim that $11$ is a factor. 
    
    Suppose that $11$ is not a factor, then we have a periodic word $010101\cdots$.    
    We will now show that the morphism generating this word cannot have determinant $\pm 1$. The word $\phi(0)$ must be a prefix of $(01)^\infty$. If it is of the form $(01)^k$ for some $k \in \mathbb{N}$, then the determinant is divisible by $k$ and for $k \ne 1$ the determinant cannot be equal to $\pm 1$. If $\phi(0) = 01$, then 
    $\phi(1) = (01)^l$ for some $l \in \mathbb{N}$. In the case  $l \ne 1$, the determinant must be divisible by $l$, and in the case $l=1$, the determinant is equal to $0$. In the remaining case of $\phi(0) = (01)^k0$, we have $\phi(1) = 1(01)^l$ for some $k, l \in \mathbb{N}_0$. In this case $A_\phi = \left(\begin{smallmatrix} k+1 & l \\ k & l+1 \end{smallmatrix}\right)$, and $\det A_\phi=(k+1)(l+1) - kl = k+l+1 > 1$, since at least one of the numbers $k, l$ is not equal to $0$.
       
    So, we proved that $w$ contains the factor  $11$. We can write $\phi(0)$ in the following form: $$\phi(0)=0\underbrace{1\cdots 1}_{x+pk_1}0\underbrace{1\cdots 1}_{x+pk_2}\cdots0\underbrace{1\cdots 1}_{x+pk_{l-1}}0\underbrace{1\cdots 1}_{y+pk_l} \mbox{, for some }y, k_1, \ldots, k_l \in \mathbb{N}_0.$$ The remainders of the lengths of returns modulo $p$ are the same for all blocks of 1's, possibly except for the last one (since it does not have to be a return to 0 in $w$). The word $\phi(1)$ can also be rewritten in the form $$\phi(1)=\underbrace{1\cdots 1}_{z_1+pk'_1}0\underbrace{1\cdots 1}_{x+pk'_2}\cdots0\underbrace{1\cdots 1}_{x+pk'_{m-1}}0\underbrace{1\cdots 1}_{z_2+pk'_m}\mbox{, for some }z_1,z_2, k'_1, \ldots, k'_m \in \mathbb{N}_0.$$
    
    Since the word $w$ contains $01$ as a factor, it also contains $\phi(0)\phi(1)$ as a factor, so  $0\underbrace{1\cdots 1}_{y+pk_l}\underbrace{1\cdots 1}_{z_1+pk'_1}0$ is also a factor of $w$, hence $y+z_1 \equiv_p x$.
    
    Similarly, $10$ is a factor of $w$, so  $\phi(1)\phi(0)$ is a factor, and $11$ is a factor, so $\phi(1)\phi(1)$ is also a factor.  Thus, $z_2+0 \equiv_p x$ and $z_2+z_1 \equiv_p x$. 
    
    The last three equivalences modulo $p$ imply that $z_1 \equiv_p 0$, $z_2 \equiv_p x$ and $y \equiv_p x$. So Parikh vectors of all returns in the images of  $\phi(0)$ and $\phi(1)$ are the same modulo $p$. 

     We have that the word $\phi(0)$ is a concatenation of $l$ returns to 0, and  $\phi(1)$ is a concatenation of $m$ returns to 0, so $|\phi(0)|_0 = l$, $|\phi(0)|_1 = lx+ps$, $|\phi(1)|_0 = m$, $|\phi(1)|_1 = mx+pt$ for some $s, t \in \mathbb{N}_0$.  A contradiction with the condition on the determinant of the matrix: $\det A_\phi = l(mx+pt)-m(lx+ps) = p(lt-ms) \ne \pm 1$. 
\end{proof}

\hypertarget{setdet0}{}

\section{Necessity of the condition for the WELLDOC property} \label{secdet0}

In this section, we prove the necessity of the condition from Theorems \ref{thdet1bin} and \ref{thdet1all} for the WELLDOC property. For that, we will use the recognizability property of the morphisms from \cite{recogn}.

\subsection{Recognizable morphisms}

Let $w$ be an infinite word generated by a morphism $\phi$; we will consider its factorization into blocks $\phi(a)$ for letters $a \in \Sigma$. Let $f_w: \mathbb{N} \to \mathbb{N}$ be a function defined as follows:
\begin{equation*}
    i \mapsto f_w(i) = \begin{cases}
        |\phi(w[0, i))|, & \mbox{if } i>0,\\
        0, & \mbox{if }i=0.
    \end{cases}
\end{equation*}
Then $i$-th block $\phi(w_i)$ is $w[f_w(i), f_w(i+1))$. The indices $f_w(i)$ are called \textit{cutting points}.

We now recall some basic notation on symbolic dynamical systems arising from infinite words, which can be found, e.g., in \cite[\S 1.5]{W}. 
A \textit{biinfinite word} is a sequence of letters from $\Sigma$ indexed with $\mathbb{Z}$; the set of biinfinite words over $\Sigma$ is then denoted by  $\Sigma^\mathbb{Z}$. We equip the latter with the metrizable product topology, where $\Sigma$ is endowed with the discrete topology. Applying a morphism and cutting into blocks is defined similarly to one-way infinite words.

Let $T:\Sigma^\mathbb{Z} \to \Sigma^\mathbb{Z}$ be a left shift map: $(x_n)_{n \in \mathbb{Z}} \mapsto (x_{n+1})_{n \in \mathbb{Z}}$. A biinfinite word $x \in \Sigma^{\mathbb{Z}}$ is called periodic if $T^k(x) = x$ for some $k \ge 1$, and aperiodic otherwise. \textit{A subshift} $(X,T)$ is a set $X$ of infinite words which is shift-invariant (i.e., $T(X)=X$) and closed. We say that a subshift $(X,T)$ is generated by $x$ if $X$ is the set of all biinfinite words whose factors are also factors of the word $x$. 

A \textit{language} of a word $x$ is the set  $\mathcal{L}_x$ of all factors of $x$. For a language $\mathcal{L}$, we let   $(X_\mathcal{L},T)$ denote the subshift consisting of all biinfinite words whose factors are in $\mathcal{L}$. Finally, for a morphism  $\phi$ the language ${\mathcal{L}_\phi}$ is the language of words which are factors of words of the form $\phi^n(a)$ for some $a\in \Sigma$ and $n\in \mathbb{N}$. We will write $X_\phi$ instead of $X_{\mathcal{L}_\phi}$. 

\begin{defn}
   Let  $\phi$ be a morphism and $y \in \Sigma^\mathbb{Z}$ a biinfinite word. A pair $(k,x)$, where $k$ is an integer and $x$ is a biinfinite word, is said to be a {\it $\phi$-representation} of $y$  if $y = T^k(\phi(x))$. If $0 \le k < |\phi(x_0)|$, then $(k,x)$ is called a {\it central $\phi$-representation of $y$}. For a set $X \subset \Sigma^\mathbb{Z}$ of biinfinite words, we will say that {\it$\phi$-representation $(k, x)$ is from $X$} if $x \in X$. 
\end{defn}

\begin{defn}
    A morphism $\phi$ is called \textit{recognizable in  $X \subset \Sigma^\mathbb{Z}$} if each $y \in \Sigma^\mathbb{Z}$ has at most one central $\phi$-representation in $X$. A morphism $\phi$ is said to be \textit{recognizable in $X$ for aperiodic points} if each aperiodic word $y \in \Sigma^\mathbb{Z}$ has at most one central $\phi$-representation in $X$. 
\end{defn}
    \begin{defn}
    Let $\phi$ be a morphism, $w$ a biinfinite word, and $v=\phi(w)$. A morphism $\phi$ is called \textit{recognizable in the sense of Moss\'e for $w$} if there exists $l$ such that for each $m=f_w(i)$ and $m' \in \mathbb{Z}$ the equality $v[m-l, m+l]=v[m'-l, m'+l]$ implies $m'=f_w(j)$ for some $j$.   
\end{defn}

We will use Theorems 5.3 and 2.5 from \cite{recogn}.

\begin{thm}[Theorem 5.3 in \cite{recogn}] \label{thallrec}
    Let $\phi$ be a morphism. Then it is recognizable in $X_\phi$ for aperiodic points. 
\end{thm}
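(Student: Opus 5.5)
The plan is to prove the statement by reducing it to a combinatorial property of the family of words $\{\phi(a)\}_{a\in\Sigma}$, namely \emph{elementarity} in the sense of Ehrenfeucht--Rozenberg/Karhumäki, and then to handle arbitrary morphisms by decomposing them. Call a morphism $\tau:\Sigma^*\to\Delta^*$ \emph{elementary} if it cannot be written as $\tau=\alpha\circ\beta$ with $\beta:\Sigma^*\to B^*$ and $|B|<|\Sigma|$.

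First I would show that every morphism $\phi$ (nonerasing, as throughout the paper) factors as $\phi=\alpha\circ\beta$ with $\alpha:B^*\to\Sigma^*$ elementary and $|B|\le|\Sigma|$. I would do this by induction on $|\Sigma|$: if $\phi$ is not elementary, factor it through a smaller alphabet and recurse on the left factor.

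Second, I would prove the core claim: an elementary morphism $\tau$ is recognizable for aperiodic points in any subshift $Y$. Suppose an aperiodic $y$ has two distinct central $\tau$-representations $(k,x)$ and $(k',x')$. Then the two cutting sequences of $y$ differ. Comparing $\tau(x)$ and $\tau(x')$ along $y$, I would extract a nontrivial relation between products of the words $\tau(a)$. This relation either comes from a bi-infinite word with two different factorizations over $\{\tau(a)\}$, or, via a pigeonhole argument on pairs of offsets, it yields two distinct finite factorizations of the same word. For the finite case, the defect theorem then says that $\{\tau(a)\}$ is generated by fewer than $|\Sigma|$ words, so $\tau$ factors through a smaller alphabet, contradicting elementarity.

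The delicate point is the genuinely bi-infinite case. There the pigeonhole argument may only produce relations of the form $u^\infty$, meaning that two factorizations which never synchronize force periodicity. I would use this to conclude that $y$ is periodic, contradicting the assumption that $y$ is aperiodic. This is exactly where the hypothesis ``for aperiodic points'' is consumed. I expect this step to be the main obstacle: one must produce either a finite relation, which the defect theorem handles, or periodicity of $y$, and nothing in between.

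Third, I would assemble the pieces. Recognizability for aperiodic points behaves well under composition. If $\beta$ is recognizable in $X$ and $\alpha$ is recognizable in the closure of $\bigcup_k T^k\beta(X)$, then $\alpha\circ\beta$ is recognizable in $X$, both for aperiodic points. Here one uses that if $y=T^j\alpha(z)$ is aperiodic then so is $z$, because $\alpha$ is nonerasing. The left factor $\alpha$ is elementary, so the second step handles it in any subshift. For the right factor $\beta:\Sigma^*\to B^*$ with $|B|<|\Sigma|$, I would iterate the decomposition, taking care to apply the inductive hypothesis in the appropriate image subshift rather than in $X_\phi$ itself; the induction is on alphabet size.

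Finally, I would specialize to $X=X_\phi$, which is shift-invariant and closed. Since $\phi(X_\phi)\subseteq X_\phi$, central representations from $X_\phi$ are covered by the above, and this gives the statement.
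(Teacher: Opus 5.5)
\textbf{Verdict: the proposal has two genuine gaps.} The paper gives no proof of its own here; it imports Theorem 5.3 of \cite{recogn}. Your overall architecture is reasonable: split off an elementary factor, use a defect theorem for bi-infinite words, compose, induct on the alphabet. But both load-bearing steps are missing.

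\textbf{Step 2: the non-synchronizing case.} Elementarity enters your argument only through the finite defect theorem. So the dichotomy ``a finite relation, or $y$ is periodic, and nothing in between'' would have to hold for arbitrary morphisms, and it does not. Take $\tau(0)=ac$, $\tau(1)=bc$, $\tau(2)=ca$, $\tau(3)=cb$.
\begin{itemize}
\item The set $\{ac,bc,ca,cb\}$ is a prefix and suffix code, so it satisfies no nontrivial finite or one-sided infinite relation.
\item Yet for every aperiodic $x\in\{0,1\}^{\mathbb Z}$, the aperiodic word $y=\tau(x)$ has a second central representation $(1,x')$ with $x'\in\{2,3\}^{\mathbb Z}$.
\item The cutting points of $(1,x')$ are the odd positions, so they never meet those of $(0,x)$.
\item The pigeonhole on offsets only produces conjugacy loops $c\,(ac)=(ca)\,c$ and $c\,(bc)=(cb)\,c$, which can be concatenated in any aperiodic order.
\end{itemize}
This $\tau$ is not elementary, since its images lie in $\{a,b,c\}^*$, and that is exactly the point. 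In the non-synchronizing case, the drop of combinatorial rank has to be extracted directly from two interleaved factorizations of an aperiodic word. That is the content of the defect theorem for bi-infinite words of Karhum\"aki, Ma\v{n}uch and Plandowski, and your sketch gives no argument for it. Separately, when the two factorizations share a cut point but never resynchronize, the pigeonhole gives no finite relation either; there you need the defect theorem for one-sided infinite words.

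\textbf{Step 3: the right factor $\beta\colon\Sigma^*\to B^*$.} You give no mechanism for it. However you iterate the decomposition, the domain alphabet stays $\Sigma$. The map $\beta$ is not elementary, is in general not fully recognizable, and is not a substitution on a smaller alphabet. So the induction on alphabet size has nothing to act on, and recognizability of $\beta$ in $X_\phi$ must come from the substitutive structure.

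The missing idea is the conjugate substitution $\psi=\beta\circ\alpha\colon B^*\to B^*$, which satisfies $\phi\circ\alpha=\alpha\circ\psi$ and $\beta\circ\phi=\psi\circ\beta$. The route would be:
\begin{itemize}
\item Lift points of $X_\phi$ through $\alpha$ to points of a shift of $\psi$.
\item Then two central $\beta$-representations of an aperiodic $y$ become central $\psi$-representations, which coincide by induction.
\item Since $\beta$ is nonerasing, this pins down the offset and the point of $X_\phi$.
\item Only then do your composition lemma and the full recognizability of the elementary $\alpha$ finish the job.
\end{itemize}

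Even this needs care for non-primitive $\phi$. Take $\phi(0)=00$, $\phi(1)=11$, $\phi(2)=10$, with $\alpha$ the inclusion of $\{0,1\}$ and $\beta=\phi$. Then $X_\psi=\{0^{\mathbb Z},1^{\mathbb Z}\}$, while $\cdots 11.00\cdots\in X_\phi$ is not a shifted $\alpha$-image of a point of $X_\psi$. So the inductive statement must be made for a class of shifts stable under this passage, not for $X_\psi$ alone; for example, shifts generated by the words $\psi^n(v)$ with $v$ in a fixed finite set.
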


\begin{thm}[Theorem 2.5 in \cite{recogn}] \label{threcMos}
    Let $\phi$ be a morphism, $x$ a biinfinite word, $(X,T)$ a subshift generated by $x$. If $\phi$ is recognizable in $X$, then $\phi$ is recognizable in the sense of Moss\'e for $x$.
\end{thm}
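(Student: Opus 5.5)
The plan is a compactness argument by contradiction: a failure of Mossé recognizability at larger and larger radii is turned, in the limit, into a single biinfinite word with two different central $\phi$-representations from $X$.

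Write $v=\phi(x)$ and suppose $\phi$ is not recognizable in the sense of Mossé for $x$. Then for every $l\in\mathbb{N}$ there are a cutting point $m_l=f_x(i_l)$ and an integer $m'_l$, which is not a cutting point, such that
$$v[m_l-l,\,m_l+l]=v[m'_l-l,\,m'_l+l].$$
Since $m'_l$ is not a cutting point, it lies strictly inside some block: there is $j_l$ with $f_x(j_l)<m'_l<f_x(j_l+1)$. Set $k_l=m'_l-f_x(j_l)$, so that
$$0<k_l<|\phi(x_{j_l})|\le M, \qquad M=\max_{a\in\Sigma}|\phi(a)|.$$
Rewriting the two shifted words gives
$$T^{m_l}(v)=\phi(T^{i_l}x), \qquad T^{m'_l}(v)=T^{k_l}\bigl(\phi(T^{j_l}x)\bigr).$$
So $\bigl(0,T^{i_l}x\bigr)$ and $\bigl(k_l,T^{j_l}x\bigr)$ are $\phi$-representations of two words that agree on the window $[-l,l]$.

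Next I pass to limits. The words $T^{i_l}x$ and $T^{j_l}x$ belong to $X$, since all their factors are factors of $x$. The space $\Sigma^{\mathbb{Z}}$ is compact and $X$ is closed. Since $k_l$ takes only finitely many values, I can pass to a subsequence along which:
\begin{enumerate}
\item $k_l=k$ is constant, with $0<k<M$;
\item $T^{i_l}x\to z\in X$ and $T^{j_l}x\to z'\in X$;
\item the letter $x_{j_l}$ is constant, equal to $z'_0$.
\end{enumerate}
Because $\phi$ is nonerasing, the map $\phi\colon\Sigma^{\mathbb{Z}}\to\Sigma^{\mathbb{Z}}$ (with the block of index $0$ placed at position $0$) is continuous, and so is $T$. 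Hence
$$\phi(T^{i_l}x)\to\phi(z), \qquad T^{k}\bigl(\phi(T^{j_l}x)\bigr)\to T^k\bigl(\phi(z')\bigr).$$
These two sequences agree on $[-l,l]$ with $l\to\infty$, so their limits coincide. Call the common limit $y=\phi(z)=T^k(\phi(z'))$.

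Finally, $(0,z)$ and $(k,z')$ are both central $\phi$-representations of $y$ from $X$. For $(k,z')$ this uses $0<k<|\phi(z'_0)|$, which holds because $x_{j_l}=z'_0$ along the subsequence. They are distinct since $k\neq0$. This contradicts recognizability of $\phi$ in $X$.

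The main obstacle is the bookkeeping in the limit step. One must make sure that the offset $k$ stays strictly inside the block $\phi(z'_0)$, so that it is neither $0$ nor the next cutting point; stabilizing both $k_l$ and the letter $x_{j_l}$ handles this. One must also justify continuity of $\phi$ on biinfinite words, which follows because each coordinate of $\phi(z)$ depends on finitely many coordinates of $z$ when the morphism is nonerasing.
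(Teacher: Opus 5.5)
Your proof is correct and complete. You negate Moss\'e recognizability, recentre at $m_l$ and $m'_l$, freeze the offset $k_l\in\{1,\dots,M-1\}$, and extract limits $z,z'\in X$. This yields $y=\phi(z)=T^k(\phi(z'))$ with two distinct central representations $(0,z)$ and $(k,z')$, contradicting recognizability in $X$.

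The paper gives no proof of its own for this statement; it only quotes Theorem~2.5 of \cite{recogn}. So there is nothing to compare your argument against. Two small remarks. First, fixing the letter $x_{j_l}$ is automatic once $T^{j_l}x\to z'$. Second, continuity of $\phi$ on $\Sigma^{\mathbb{Z}}$ does rely on the paper's standing assumption that morphisms are nonerasing, which you correctly invoke.
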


\subsection{Proof of necessity of the condition for the WELLDOC property}
In this subsection we give a proof of the necessity of the condition from our WELLDOC criterion. 

\begin{defn}
    Let $w$ be an infinite word generated by a morphism. A factor $u$ of the word $w$ is called {\it cutting} if there exists an index $k$ such that for each occurrence of $u$ in $w$ at position $m$ the index $m+k$ is a cutting point.  
\end{defn}

Our goal will be to find a cutting factor due to the following theorem:

\hypertarget{thinWD}{}

\begin{thm} \label{thinWD}
    Let $w$ be an infinite word generated by a morphism $\phi$ such that $\det A_\phi \ne \pm 1$. If $w$ contains a cutting factor $u$, then $w$ does not satisfy the WELLDOC property. 
\end{thm}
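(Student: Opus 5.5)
The approach is to show that for a cutting factor $u$ with shift $k$, the Parikh vectors of prefixes preceding occurrences of $u$ are controlled modulo a suitable $m$ by the image $A_\phi(\mathbb{Z}/m\mathbb{Z})^\sigma$. That image is a proper subset when $\det A_\phi\neq\pm1$, by Proposition \ref{algdet0}.

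First, fix $m$ (a prime divisor $p$ of $\det A_\phi$ works) and a vector $v$ with $v\notin A_\phi(\mathbb{Z}/m\mathbb{Z})^\sigma$, as given by Proposition \ref{algdet0}. Since $A_\phi(\mathbb{Z}/m\mathbb{Z})^\sigma$ is a subgroup and hence proper, such a $v$ exists, and so does a whole coset of $(\mathbb{Z}/m\mathbb{Z})^\sigma/A_\phi(\mathbb{Z}/m\mathbb{Z})^\sigma$ avoiding the image.

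Next, consider any occurrence of $u$ at a position $a$. By the cutting property, $a+k$ is a cutting point, so $a+k=f_w(j)=|\phi(w[0,j))|$ for some $j$. Hence $w[0,a+k)=\phi(w[0,j))$, whose Parikh vector is $A_\phi V_{w[0,j)}$ and therefore lies in $A_\phi(\mathbb{Z}/m\mathbb{Z})^\sigma$ modulo $m$.

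Then relate this to the prefix $w[0,a)$ preceding $u$. If $k\ge 0$ and $k\le |u|$, the word $w[a,a+k)$ is the prefix of $u$ of length $k$, which is fixed. So $V_{w[0,a)}=A_\phi V_{w[0,j)}-V_{u[0,k)}$, and therefore $X_{u,m}\subseteq A_\phi(\mathbb{Z}/m\mathbb{Z})^\sigma - V_{u[0,k)}$. This is a translate of a proper subgroup and so cannot equal $(\mathbb{Z}/m\mathbb{Z})^\sigma$, which means WELLDOC fails for $u$. If $k$ is negative or exceeds $|u|$, the segment between $a$ and $a+k$ is not determined by $u$. To handle this, I would replace $u$ by a longer factor: every occurrence of $u$ extends to an occurrence of some factor $u'=xuy$ covering positions $a+k$ through $a+|u|$. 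One such extension $u'$ occurs in $w$, and it is again cutting with an adjusted shift lying within $[0,|u'|]$. Applying the argument to $u'$ then gives the failure of WELLDOC for $u'$, which suffices.

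The main subtlety is exactly this normalization of $k$ so that the segment between the occurrence and the cutting point is a fixed word. Choosing an extension $u'$ of $u$ that contains the cutting point, and noting that occurrences of $u'$ are in particular occurrences of $u$, resolves it. The case $k=0$ needs care: it is the situation where $a$ itself is a cutting point, and it is covered by the same formula with an empty word.
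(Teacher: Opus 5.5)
Your proof is correct and follows the paper's argument. You write the prefix up to the cutting point as $\phi(s)$, so that $V_p = A_\phi V_s - V_{u[0,k)}$, and then invoke Proposition \ref{algdet0} to conclude that $X_{u,m}$ lies in a translate of a proper subgroup. Your extra step of extending $u$ to a longer factor so that $0\le k\le |u|$ fills in a point the paper leaves implicit: it uses $u[0,k)$ directly, which is harmless in its application, since Theorem \ref{thcutexists} produces shifts with $0\le k<|u|$.
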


\begin{proof} Consider the set of prefixes preceding $u$; let $p$ be one of them. Then $pu$ is also a prefix of $w$. By the definition of a cutting factor, the index $|p|+k$ is a cutting point, where $k$ is the constant associated with this cutting factor. So, $pu[0,k) = \phi(s)$ for some word $s \in \Sigma^*$. This means that $V_p$ can be expressed as $A_\phi V_{s} - V_{u[0,k)}$.
Since  $\det A_\phi \ne \pm 1$, we cannot express all vectors in ($\mathbb{Z}/m\mathbb{Z})^\sigma$ this way for some $m$ given by Proposition \ref{algdet0}.
This contradicts the WELLDOC property. 
\end{proof}

To apply Theorems \ref{thallrec} and \ref{threcMos}, we use the fact that each recurrent right-infinite word can be continued to biinfinite word with the same set of factors. Although this fact is likely to be folklore, we provide its proof for the sake of completeness.

\begin{lm} \label{lmcontoleft}
    Let $w$ be a right-infinite recurrent word. Then there exists a biinfinite word $w'$ such that $w'_i = w_i$ for each $i \ge 0$ and the set of factors of $w'$ coincides with the set of factors of $w$.
\end{lm}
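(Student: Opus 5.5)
We build the left extension by a compactness (K\"onig's lemma) argument. Since $w$ is recurrent, for each $n$ the prefix $w[0,n)$ occurs in $w$ at some position $a_n>0$, and indeed at infinitely many positions; in particular we may choose occurrences at positions $a_n \ge n$. Then $w[a_n-n, a_n+n)$ is a factor of $w$ whose right half is $w[0,n)$. Define the finite word $z_n = w[a_n-n, a_n)$, a factor of $w$ of length $n$ such that $z_n w[0,n)$ is a factor of $w$.

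We want a left-infinite word $\cdots x_{-2}x_{-1}$ such that every word $x[-n,0)\,w[0,N)$ is a factor of $w$. We get it by choosing the letters $x_{-1}, x_{-2},\dots$ one at a time. At step $n$ we keep the following invariant: for infinitely many indices $N$ (equivalently, for all $N$, since factor sets are closed under taking factors), the word $x[-n,0)w[0,N)$ is a factor of $w$. For the base step, note that the candidate sets form a finitely branching tree. A node of depth $n$ is a word $y$ of length $n$ such that $y\,w[0,N)$ is a factor of $w$ for every $N$. The tree is nonempty at every depth. To see this, for each $N \ge n$ take the word $y_N$, the suffix of length $n$ of $z_N$. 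Only finitely many words of length $n$ exist, so some $y$ equals $y_N$ for infinitely many $N$. That $y$ works for all $N$, because $y\,w[0,N')$ is a factor of $z_N w[0,N)$ whenever $N' \le N$. The tree is also closed under taking suffixes, since a suffix of a good word is good. K\"onig's lemma then gives an infinite path $x_{-1}, x_{-2}, \dots$ with $x[-n,0)$ good for all $n$. This path is obtained by extending to the left, which is legitimate because good words of length $n+1$ restrict to good words of length $n$, and a node with infinitely many good extensions exists at each step. Set $w' = \cdots x_{-2}x_{-1}w_0w_1\cdots$.

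It remains to compare factor sets. Every factor of $w'$ lies inside some window $w'[-n,N) = x[-n,0)w[0,N)$, which is a factor of $w$ by construction. So the factors of $w'$ are factors of $w$. Conversely, every factor of $w$ is a factor of $w' $, because $w$ is the right half of $w'$. Therefore the two factor sets coincide, and $w'_i = w_i$ for $i\ge0$.

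The only delicate point is the compactness step, namely guaranteeing consistent left extensions at all depths. This is handled by requiring goodness for all $N$ simultaneously and using the pigeonhole principle over the finitely many words of each length, which is where recurrence (infinitely many occurrences of each prefix at positions far to the right) is used.
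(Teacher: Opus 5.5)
Your proof is correct and follows the same route as the paper: a K\"onig's lemma argument on the finitely branching tree of left extensions of the prefixes of $w$, where recurrence guarantees that the tree is infinite. The only difference is that you require a node $y$ to satisfy ``$y\,w[0,N)$ is a factor of $w$'' for every $N$, which costs you the extra pigeonhole step. The paper only asks that $v\,w[0,k)$ be a factor for a node $v$ of depth $k$, and this already suffices because the windows $w'[-k,k)$ exhaust all factors of $w'$.
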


\begin{proof} 
    For the proof, we build a tree of prolongations to the left of the word $w$ as follows. Starting with an empty tree, at step $k$ we build a tree of height $k$: its root is the prefix of $w$ of length $k$, and the vertices of its branches are marked with letters, so that each branch corresponds to some factor of $w$ of length $2k$ with suffix $w[0,k)$. The tree on step $k$ is obtained by extending the tree on step 
     $k-1$ as follows. For each branch of the tree (corresponding to a word $u$ of length $k-1$ such that $uw[0,k-1)$ is a factor of $w$), we add to its leaf children marked with each letter $a\in\Sigma$ such that $auw[0, k)$ is a factor of $w$. 
     In fact, for  each $k$, the tree contains all factors of $w$ of length $2k$ with suffix $w[0,k)$. 

     Let us now prove that the tree is infinite. 
    Since $w$ is recurrent, each its prefix occurs in it infinitely often. On step $k$ we build all possible extensions to the left of length $k$ of the prefix of length $k$ of $w$. Since this prefix of length $k$ occurs infinitely often in $w$, it has infinitely long extensions to the left in $w$. Therefore, at each step at least one branch is continued.

    By K\"{o}nig's lemma, each infinite tree with bounded degrees of vertices contains an infinite path. We proved that our tree is infinite, and the number of children of each vertex is bounded by the size of the alphabet. So, our tree contains an infinite path, corresponding to an infinite prolongation to the left. The word $w'$ can be chosen as the word $w$ with this infinite prolongation to the left;  the set of its factors coincides with the set of factors of $w$ by construction. 
\end{proof}



\begin{thm} \label{thcutexists}
    Let $w$ be a right-infinite recurrent aperiodic word generated by a morphism $\phi$. Then there exists $k$ such that each factor $u$ of length greater than $k$ is a cutting factor.
\end{thm}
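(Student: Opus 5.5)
We extend $w$ to a biinfinite word, apply the recognizability results (Theorems \ref{thallrec} and \ref{threcMos}) to get a Mossé constant $l$, and then show that every factor of length greater than $k:=2l+1$ is cutting, witnessed by a cutting point that some fixed occurrence of it contains.

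\textbf{Step 1: the biinfinite word.} By Lemma \ref{lmcontoleft}, $w$ extends to a biinfinite word $w'$ with the same set of factors. Since $w$ is aperiodic and recurrent, $w'$ is aperiodic: a biinfinite periodic word has only finitely many factors of each length shaped by the period, so $w$ would be eventually periodic, and a recurrent eventually periodic word is periodic. Because $w$ is generated by $\phi$, its factors are factors of $\phi^n(0)$, so $\mathcal{L}_w\subseteq\mathcal{L}_\phi$. Let $X$ be the subshift generated by $w'$; then $X\subseteq X_\phi$.

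\textbf{Step 2: Mossé recognizability for $w'$.} I want $\phi$ to be recognizable in $X$, so that Theorem \ref{threcMos} applies. Every $y\in X$ has the same language as $w$; this holds at least for $y$ in the minimal part, and in general every factor of $y$ is a factor of $w$. If such $y$ is aperiodic, Theorem \ref{thallrec} gives at most one central representation in $X_\phi\supseteq X$. The main obstacle is the periodic points of $X$: if $X$ is not minimal it can contain periodic points (for instance $0^\mathbb{Z}$ alongside the aperiodic points), and Theorem \ref{thallrec} says nothing about them.

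I see two ways to handle this. First, I would inspect the proof of Theorem \ref{threcMos}. It argues by compactness: if Mossé recognizability fails for $x$, one gets two central representations of a limit point. I would rerun that argument with the specific point $y=\phi(w')$ and its shifts, and check that the limit point obtained is aperiodic; if not, I would find a way to make it so. Second, and more directly, I would define $w''$ by $\phi(w'')=w'$, taking the natural desubstitution of $w'$ obtained by extending the fixed-point structure to the left. Then the cutting points of $w'$ are the $f$-images of the positions of $w''$. I would then argue that if arbitrarily long factors $u$ occurred at two positions with different offsets to cutting points, compactness would produce two distinct central representations of some point of $X$. Choosing the occurrences centered, that point is a limit of shifts of $w'$. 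It should be aperiodic, because aperiodicity of $w$ together with bounded repetition in long windows survives in the limit, though this still needs checking, and then Theorem \ref{thallrec} yields a contradiction. I expect this aperiodicity-of-the-limit step to be the delicate one.

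\textbf{Step 3: conclusion.} With the Mossé constant $l$ for $w'$ in hand, set $k=2l+1$ and let $u$ be a factor of $w$ with $|u|>k$. Fix one occurrence of $u$ in $w$ that lies far enough from the start, which recurrence allows. Because $\phi$ is nonerasing, the blocks of the fixed point have bounded length $L$. Assume $|u|>2l+L$ by enlarging $k$, so this occurrence contains a cutting point $m_0=f_w(i)$ at offset $k_0$ from its start, with $l$ letters of $u$ on each side. Any other occurrence at position $m'$ has $v[m'+k_0-l,m'+k_0+l]=v[m_0-l,m_0+l]$, so $m'+k_0$ is a cutting point. This is exactly the definition of $u$ being cutting, with constant $k_0$. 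Finally, the cutting points of $w'$ restricted to nonnegative positions must coincide with the $f_w(i)$; I would ensure this by choosing $w''$ to agree with $w$ on nonnegative indices.
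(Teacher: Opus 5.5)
\textbf{There is a genuine gap in Step~2, and it cannot be closed, because the statement as posed is false.}

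Your Steps~1 and~3 are exactly the paper's argument: extend $w$ by Lemma~\ref{lmcontoleft}, get a Moss\'e constant $l$, and note that a factor longer than $2l+\|\phi\|$ contains a cutting point with $l$ letters on each side. The problem is Step~2, which you leave open. Theorem~\ref{threcMos} needs $\phi$ to be recognizable in $X$ at \emph{every} point, whereas Theorem~\ref{thallrec} only covers aperiodic points. The paper's proof makes precisely this jump without comment, so you have found a real flaw. Neither of your repairs can fix it, since the statement fails whenever $X$ contains badly behaved periodic points.

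Take $\phi(0)=001$, $\phi(1)=11$.
\begin{itemize}
\item Since $\phi^{n+1}(0)=\phi^n(0)\phi^n(0)1^{2^n}$, the fixed point $w$ is recurrent and aperiodic: it has infinitely many $0$'s and arbitrarily long runs of $1$'s.
\item Every block ends with $1$ and $w$ starts with $0$. Hence each maximal run of $1$'s in $w=\phi(w)$ is the last letter of a block $\phi(0)$, at some position $r$, followed by $b\ge 0$ blocks $11$.
\item Inside $[r,r+2b]$ the cutting points are exactly the positions at odd distance from $r$, and $b$ is unbounded.
\item Given $n\ge 1$ and any offset $t\in\mathbb{Z}$, choose such a run with $2b\ge n+2|t|$, and take the occurrences of $1^n$ at $m=r+|t|$ and $m+1$. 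Both $m+t$ and $m+t+1$ lie in $[r,r+2b]$, and only one of them is a cutting point.
\end{itemize}
So no $1^n$ is cutting, and no threshold $k$ exists.

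Dynamically, $1^{\mathbb{Z}}\in X$ has the two central representations $(0,1^{\mathbb{Z}})$ and $(1,1^{\mathbb{Z}})$. Hence $\phi$ is not recognizable in $X$, and Moss\'e recognizability for $w'$ really fails. In this example the compactness limit in either of your repair strategies is $1^{\mathbb{Z}}$, so the ``bounded repetition'' you hoped for is simply not available. Aperiodicity of $w'$, which you verify in Step~1, does not keep periodic points out of $X$.

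What survives is the existence of \emph{some} cutting factor, and that is all Theorems~\ref{thinWD} and~\ref{th:necessary} actually use. Your compactness idea proves it if you center all windows at one fixed cutting point rather than at uncontrolled occurrences.
\begin{itemize}
\item Let $y=\phi(w')$, indexed so that $y$ agrees with $\phi(w)=w$ on $[0,\infty)$. You do not need a desubstitution $w''$ of $w'$, whose existence would itself require proof.
\item Suppose no window $y[-n,n]$ were cutting with offset $n$. Then there would be positions $c_n\ge n$ of $w$ that are not cutting points, with $T^{c_n}y$ agreeing with $y$ on $[-n,n]$.
\item Write $T^{c_n}y=T^{s_n}\phi(T^{j_n}w')$ with $0<s_n<|\phi(w'_{j_n})|$. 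Passing to a subsequence gives $y=T^{s}\phi(x)$ with $x\in X$ and $0<s<|\phi(x_0)|$. This is a second central representation of $y$ besides $(0,w')$.
\item The limit is now $y$ itself. It is aperiodic, because its right half $w$ is recurrent and aperiodic, hence not eventually periodic. Theorem~\ref{thallrec} then gives a contradiction.
\end{itemize}
This produces cutting factors but no uniform length bound. So the statement must either be weakened in this way, or be given a hypothesis that keeps periodic points out of $X$ (e.g.\ uniform recurrence of $w$).
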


\begin{proof}
    Let $w'$ be a biinfinite word with the same set of factors as $w$ and having $w$ as a suffix (see Lemma \ref{lmcontoleft} for an explicit construction). By Theorem \ref{thallrec}, the morphism $\phi$ is recognizable in $X_\phi$ for aperiodic points. By the definition of $X_{\phi}$, the subshift $(X,T)$ generated by $w'$ is contained in $(X_\phi,T)$. Therefore, $\phi$ is recognizable in the sense of Moss\'e for $w'$ by Theorem \ref{threcMos}.  Let $l$ be as in the definition of recognizability in the sense of Moss\'e for $w'$.
    
    Let $z$ be any factor of length greater than $2l+||\phi||$, where $||\phi||$ is the maximum length of the images of letters in $\phi$, i.e., $||\phi|| = \max \limits_{a \in \Sigma} (|\phi(a)|)$. Consider an occurrence of this factor in the word $w$, denote the position of this occurrence by $i$, so that $z = w[i, i+|z|)$. It has a cutting point in the interval of indices $[i+l, i+l+||\phi||)$, say at some index $i+l+n$, where $n < ||\phi||$. Now take any other occurrence of $z$ in $w$; denote its position by $j$, so that $z = w[j, j+|z|)$. Since $\phi(w')$ has the same letters at positive positions, we have
    $$\phi(w')[(i+l+n)-l, (i+l+n)+l] = z[k,2l+n] = \phi(w')[(j+l+n)-l, (j+l+n)+l]ю$$ 
    Since $\phi$ is recognizable in the sense of Moss\'e, the point $m+l+n$ is also a cutting point. 
    This is exactly the definition of a cutting factor. Thus, we have proved the theorem for $k=2l+||\phi||$.
\end{proof}

Now we prove the necessity of the condition for the WELLDOC property for words generated by morphisms on an alphabet  $\Sigma$ with $|\Sigma| \ge 2$.

\begin{thm}\label{th:necessary}
     Let $w$ be an infinite word over an alphabet $\Sigma$ with $|\Sigma| \ge 2$ generated by a morphism $\phi$ such that $\det A_\phi \ne \pm 1$. Then the word does not satisfy the WELLDOC property. 
\end{thm}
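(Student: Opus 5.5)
Assume $w$ satisfies the WELLDOC property; we derive a contradiction. Being WELLDOC, $w$ is recurrent. We split into the aperiodic and the periodic cases.

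\emph{Aperiodic case.} Theorem \ref{thcutexists} gives a cutting factor $u$ (any factor longer than the constant $k$), and Theorem \ref{thinWD} then shows that $w$ is not WELLDOC. This case is immediate.

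\emph{Periodic case.} Since $w$ is recurrent and eventually periodic, it is purely periodic: $w=v^\infty$ with $v$ primitive, $|v|=q\ge 2$, and $v$ containing at least two distinct letters because $|\Sigma|\ge 2$. I first claim that a periodic word with at least two letters is never WELLDOC. Take $m$ and the factor $u=v$. Its occurrences are exactly at positions $\equiv 0 \pmod q$, since $v$ is primitive. The preceding prefixes are therefore $v^j$, with Parikh vectors $jV_v$. Modulo $m$ these lie in the cyclic subgroup generated by $V_v$, which has at most $m$ elements. For $\sigma\ge 2$ we have $m<m^\sigma$, so $X_{v,m}\neq(\mathbb{Z}/m\mathbb{Z})^\sigma$. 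Hence the periodic case also fails WELLDOC, independently of the determinant.

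If the alphabet contains letters not occurring in $w$, the coordinates of those letters are constantly $0$ in every Parikh vector, and WELLDOC fails trivially for any $m\ge2$. So we may assume every letter of $\Sigma$ occurs in $w$, which the periodic argument above uses implicitly.

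The main point to check is that the hypotheses of Theorem \ref{thcutexists} (recurrent, aperiodic) are met in the first case. Recurrence is forced by WELLDOC itself: every factor must occur infinitely often, since otherwise $X_{u,m}$ is finite and fixed and cannot cover $(\mathbb{Z}/m\mathbb{Z})^\sigma$ for large $m$. Aperiodicity holds exactly in the case where we invoke the theorem. One subtlety is that Theorem \ref{thcutexists} relies on recognizability in $X_\phi$ via Theorem \ref{thallrec}, applied to the biinfinite extension from Lemma \ref{lmcontoleft}. That extension has the same factors as $w$, so it is aperiodic as well, which is what Theorem \ref{thallrec} requires.
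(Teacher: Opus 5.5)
Your proof follows the paper's argument: non-recurrent words fail because $X_u$ is finite, and purely periodic words fail on their minimal period $v$; your bound $|X_{v,m}|\le m<m^{\sigma}$ makes the paper's one-line claim explicit, and since it never uses $|v|\ge 2$ or two distinct letters in $v$ (which does not follow from $|\Sigma|\ge 2$ anyway), it also covers $0^\infty$; aperiodic recurrent words are then handled by Theorems~\ref{thcutexists} and~\ref{thinWD}, exactly as in the paper. Your closing ``subtlety'' is misdiagnosed, however: Theorem~\ref{threcMos} needs uniqueness of central $\phi$-representations from the subshift $X$ generated by $w'$ for \emph{every} $y$, periodic ones included, while Theorem~\ref{thallrec} covers only aperiodic $y$, and aperiodicity of $w'$ does not bridge this gap---for $0\mapsto 0011$, $1\mapsto 11$ the subshift contains $1^{\mathbb{Z}}$ with the two central representations $(0,1^{\mathbb{Z}})$ and $(1,1^{\mathbb{Z}})$, no factor $1^N$ is cutting, and so Theorem~\ref{thcutexists} fails as stated, which means that although the statement survives in this example ($00$ is cutting, and Theorem~\ref{thinWD} needs only one cutting factor), this step of your argument, like the paper's, actually requires a separate proof that some cutting factor exists.
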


\begin{proof}
    If $w$ is purely periodic, then the WELLDOC property does not hold for its minimal period when the period length exceeds 1.  
    If its length is equal to $1$, then $w=0^\infty$, which is a one-letter word, and we assumed that the alphabet size is at least 2.

    If $w$ is non-recurrent, i.e., there exists a factor $u$ that occurs only finitely many times, then the set  $X_u$ is finite, so the WELLDOC property does not hold, for example, for any $m > |X_u|$. 
    
    Otherwise, if $w$ is aperiodic and recurrent, then $w$ has a cutting factor by 
    Theorem~\ref{thcutexists} 
    and hence, by Theorem~\ref{thinWD}, the WELLDOC  property does not hold.
\end{proof}

\section*{Acknowledgements}

This work was supported by the Russian Science Foundation, project 25-21-00535. The authors are grateful to Wolfgang Steiner for recommending to consult the paper~\cite{recogn}.

\end{document}